\title{Autosolvability of Halting Problem Instances \\
       for Instruction Sequences}
\author{J.A. Bergstra \and C.A. Middelburg }
\institute{Informatics Institute, Faculty of Science,
           University of Amsterdam, \\
           Science Park~107, 1098~XG Amsterdam, the Netherlands \\
           \email{J.A.Bergstra@uva.nl,C.A.Middelburg@uva.nl}}
\begin{document}
\maketitle

\begin{abstract}
We position Turing's result regarding the undecidability of the halting
problem as a result about programs rather than machines.
The mere requirement that a program of a certain kind must solve the
halting problem for all programs of that kind leads to a contradiction
in the case of a recent unsolvability result regarding the halting
problem for programs.
In this paper, we investigate this autosolvability requirement in a
setting in which programs take the form of instruction sequences.
\begin{keywords}
halting problem, instruction sequence, autosolvability,
functional unit.
\end{keywords}
\begin{classcode}
F.1.1, F.4.1.
\end{classcode}
\end{abstract}

\section{Introduction}
\label{sect-intro}

The halting problem is frequently paraphrased as follows: the halting
problem is the problem to determine, given a program and an input to the
program, whether execution of the program on that input will eventually
terminate.
To indicate that this problem might be undecidable, it is often
mentioned that an interpreter, which is a program that simulates the
execution of programs that it is given as input, cannot solve the
halting problem because the interpreter will not terminate if its input
program does not terminate.
However, Turing's result regarding the undecidability of the halting
problem is a result about Turing machines rather than programs.
It says that there does not exist a single Turing machine that, given
the description of an arbitrary Turing machine and input, will determine
whether the computation of that Turing machine applied to that input
eventually halts (see e.g.~\cite{Tur37a}).

Our objective is to position Turing's result regarding the
undecidability of the halting problem as a result about programs rather
than machines.
In the case of the unsolvability result regarding the halting problem
for programs presented in~\cite{BP04a}, the mere requirement that a
program of a certain kind must solve the halting problem for all
programs of that kind leads to a contradiction.
In this paper, we pay closer attention to this autosolvability
requirement.
Like in~\cite{BP04a}, we carry out our investigation in a setting in
which programs take the form of instruction sequences.
The instruction set concerned includes instructions whose processing
needs a device that resembles the tape of a Turing machine.

The work presented in this paper belongs to a line of research in which
program algebra~\cite{BL02a} is the setting used for investigating
issues in which instruction sequences are involved.
The starting-point of program algebra is the perception of a program as
a single-pass instruction sequence, i.e.\ a finite or infinite sequence
of instructions of which each instruction is executed at most once and
can be dropped after it has been executed or jumped over.
Moreover, basic thread algebra~\cite{BL02a} is the setting used for
modelling the behaviours exhibited by instruction sequences under
execution.%
\footnote
{In~\cite{BL02a}, basic thread algebra is introduced under the name
 basic polarized process algebra.
}
In this paper, we use a program notation rooted in program algebra,
instead of program algebra itself.
The program notation in question was first presented in~\cite{BM09l}.
In that paper, the concept of a functional unit is introduced and
studied.
Here, we will model the devices that resemble the tape of a Turing
machine by a functional unit.

This paper is organized as follows.
First, we give a survey of the program notation used in this paper
(Section~\ref{sect-PGLBbt}) and define its semantics using basic thread
algebra (Section~\ref{sect-BTAbt}).
Next, we extend basic thread algebra with operators that are related to
the processing of instructions by services (Section~\ref{sect-TAbt}) and
introduce those operator in the setting of the program notation used
(Section~\ref{sect-PGLBbt-add}).
Then, we introduce the concept of a functional unit
(Section~\ref{sect-func-unit}) and define autosolvability and related
notions in terms of functional units related to Turing machine tapes
(Section~\ref{sect-func-unit-sbs}).
After that, we discuss the weakness of interpreters when it comes to
solving the halting problem (Section~\ref{sect-interpreters}) and give
positive and negative results concerning the autosolvability of the
halting problem (Section~\ref{sect-autosolvability}).
Finally, we make some concluding remarks (Section~\ref{sect-concl}).

\section{\PGLB\ with Boolean Termination}
\label{sect-PGLBbt}

In this section, we give a survey of the program notation \PGLBbt.
This program notation is a variant of the program notation \PGLB, which
belongs to a hierarchy of program notations rooted in program algebra
(see~\cite{BL02a}).
\PGLBbt\ is \PGLB\ with the Boolean termination instructions $\haltP$
and $\haltN$ from~\cite{BM09k} instead of the termination instruction
$\halt$.
\PGLB\ and \PGLBbt\ are close to existing assembly languages and have
relative jump instructions.

In \PGLBbt, it is assumed that fixed but arbitrary non-empty finite set
$\BInstr$ of \emph{basic instructions} has been given.
The intuition is that the execution of a basic instruction may modify a
state and produces $\True$ or $\False$ at its completion.

\PGLBbt\ has the following primitive instructions:
\begin{itemize}
\item
for each $a \in \BInstr$, a \emph{plain basic instruction} $a$;
\item
for each $a \in \BInstr$, a \emph{positive test instruction} $\ptst{a}$;
\item
for each $a \in \BInstr$, a \emph{negative test instruction} $\ntst{a}$;
\item
for each $l \in \Nat$, a \emph{forward jump instruction}
$\fjmp{l}$;
\item
for each $l \in \Nat$, a \emph{backward jump instruction}
$\bjmp{l}$;
\item
a \emph{positive termination instruction} $\haltP$;
\item
a \emph{negative termination instruction} $\haltN$.
\end{itemize}
\PGLBbt\ instruction sequences have the form $u_1 \conc \ldots \conc
u_k$, where $u_1,\ldots,u_k$ are primitive instructions of \PGLBbt.

On execution of a \PGLBbt\ instruction sequence, these primitive
instructions have the following effects:
\begin{itemize}
\item
the effect of a positive test instruction $\ptst{a}$ is that basic
instruction $a$ is executed and execution proceeds with the next
primitive instruction if $\True$ is produced and otherwise the next
primitive instruction is skipped and execution proceeds with the
primitive instruction following the skipped one -- if there is no
primitive instructions to proceed with, deadlock occurs;
\item
the effect of a negative test instruction $\ntst{a}$ is the same as the
effect of $\ptst{a}$, but with the role of the value produced reversed;
\item
the effect of a plain basic instruction $a$ is the same as the effect of
$\ptst{a}$,\linebreak[2] but execution always proceeds as if $\True$ is
produced;
\item
the effect of a forward jump instruction $\fjmp{l}$ is that execution
proceeds with the $l$-th next primitive instruction -- if $l$ equals $0$
or there is no primitive instructions to proceed with, deadlock occurs;
\item
the effect of a backward jump instruction $\bjmp{l}$ is that execution
proceeds with\linebreak[2] the $l$-th previous primitive instruction --
if $l$ equals $0$ or there is no primitive\linebreak[2] instructions to
proceed with, deadlock occurs;
\item
the effect of the positive termination instruction $\haltP$ is that
execution terminates and in doing so delivers the Boolean value $\True$;
\item
the effect of the negative termination instruction $\haltP$ is that
execution terminates and in doing so delivers the Boolean value
$\False$.
\end{itemize}

\section{Thread Extraction}
\label{sect-BTAbt}

In this section, we make precise in the setting of \BTAbt\ (Basic Thread
Algebra with Boolean termination) which behaviours are exhibited on
execution by \PGLBbt\ instruction sequences.
We start by reviewing \BTAbt.

In \BTAbt, it is assumed that a fixed but arbitrary non-empty finite set
$\BAct$ of \emph{basic actions}, with $\Tau \not\in \BAct$, has been
given.
We write $\BActTau$ for $\BAct \union \set{\Tau}$.
The members of $\BActTau$ are referred to as \emph{actions}.

A thread is a behaviour which consists of performing actions in a
sequential fashion.
Upon each basic action performed, a reply from an execution environment
determines how it proceeds.
The possible replies are the Boolean values $\True$ (standing for true)
and $\False$ (standing for false).
Performing the action $\Tau$ will always lead to the reply $\True$.

\BTAbt\ has one sort: the sort $\Thr$ of \emph{threads}.
We make this sort explicit because we will extend \BTAbt\ with
additional sorts in Section~\ref{sect-TAbt}.
To build terms of sort $\Thr$, \BTAbt\ has the following constants and
operators:
\begin{itemize}
\item
the \emph{deadlock} constant $\const{\DeadEnd}{\Thr}$;
\item
the \emph{positive termination} constant $\const{\StopP}{\Thr}$;
\item
the \emph{negative termination} constant $\const{\StopN}{\Thr}$;
\item
for each $a \in \BActTau$, the binary \emph{postconditional composition}
operator $\funct{\pcc{\ph}{a}{\ph}}{\linebreak[2]\Thr \x \Thr}{\Thr}$.
\end{itemize}
We assume that there is a countably infinite set of variables of sort
$\Thr$ which includes $x,y,z$.
Terms of sort $\Thr$ are built as usual.
We use infix notation for postconditional composition.
We introduce \emph{action prefixing} as an abbreviation: $a \bapf p$,
where $p$ is a term of sort $\Thr$, abbreviates $\pcc{p}{a}{p}$.

The thread denoted by a closed term of the form $\pcc{p}{a}{q}$ will
first perform $a$, and then proceed as the thread denoted by $p$
if the reply from the execution environment is $\True$ and proceed as
the thread denoted by $q$ if the reply from the execution environment is
$\False$.
The threads denoted by $\DeadEnd$, $\StopP$ and $\StopN$ will become
inactive, terminate with Boolean value $\True$ and terminate with
Boolean value $\False$, respectively.

\BTAbt\ has only one axiom.
This axiom is given in Table~\ref{axioms-BTAbt}.%
\begin{table}[!tb]
\caption{Axiom of \BTAbt}
\label{axioms-BTAbt}
\begin{eqntbl}
\begin{axcol}
\pcc{x}{\Tau}{y} = \pcc{x}{\Tau}{x}                     & \axiom{T1}
\end{axcol}
\end{eqntbl}
\end{table}

Each closed \BTAbt\ term of sort $\Thr$ denotes a thread that will
become inactive or terminate after it has performed finitely many
actions.
Infinite threads can be described by guarded recursion.
A \emph{guarded recursive specification} over \BTAbt\ is a set of
recursion equations $E = \set{x = t_x \where x \in V}$, where $V$ is a
set of variables of sort $\Thr$ and each $t_x$ is a \BTAbt\ term of the
form $\DeadEnd$, $\StopP$, $\StopN$ or $\pcc{t}{a}{t'}$ with $t$ and
$t'$ that contain only variables from $V$.
We are only interested in models of \BTAbt\ in which guarded recursive
specifications have unique solutions.
Regular threads, i.e.\ threads that can only be in a finite number of
states, are solutions of finite guarded recursive specifications.

To reason about infinite threads, we assume the infinitary conditional
equation AIP (Approximation Induction Principle).
AIP is based on the view that two threads are identical if their
approximations up to any finite depth are identical.
The approximation up to depth $n$ of a thread is obtained by cutting it
off after it has performed $n$ actions.
In AIP, the approximation up to depth $n$ is phrased in terms of the
unary \emph{projection} operator $\funct{\projop{n}}{\Thr}{\Thr}$.
AIP and the axioms for the projection operators are given in
Table~\ref{axioms-AIP}.%
\begin{table}[!tb]
\caption{Approximation induction principle}
\label{axioms-AIP}
\begin{eqntbl}
\begin{axcol}
\AND{n \geq 0}{} \proj{n}{x} = \proj{n}{y} \Implies
                                                  x = y & \axiom{AIP} \\
\proj{0}{x} = \DeadEnd                                  & \axiom{P0} \\
\proj{n+1}{\StopP} = \StopP                             & \axiom{P1a} \\
\proj{n+1}{\StopN} = \StopN                             & \axiom{P1b} \\
\proj{n+1}{\DeadEnd} = \DeadEnd                         & \axiom{P2} \\
\proj{n+1}{\pcc{x}{a}{y}} =
                      \pcc{\proj{n}{x}}{a}{\proj{n}{y}} & \axiom{P3}
\end{axcol}
\end{eqntbl}
\end{table}
In this table, $a$ stands for an arbitrary action from $\BActTau$ and
$n$ stands for an arbitrary natural number.

We can prove that the projections of solutions of guarded recursive
specifications over \BTAbt\ are representable by closed \BTAbt\ terms of
sort $\Thr$.
\begin{lemma}
\label{lemma-projections}
Let $E$ be a guarded recursive specification over \BTAbt, and
let $x$ be a variable occurring in $E$.
Then, for all $n \in \Nat$, there exists a closed \BTAbt\ term $p$ of
sort $\Thr$ such that $E \Implies \proj{n}{x} = p$.
\end{lemma}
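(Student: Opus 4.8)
The plan is to prove the lemma by induction on $n$. To get a usable induction hypothesis, I would prove the slightly stronger statement: for every \BTAbt\ term $t$ of sort $\Thr$ all of whose variables occur in $E$ (i.e.\ lie in the set $V$ of variables of the recursion equations of $E$) and for every $n \in \Nat$, there exists a closed \BTAbt\ term $p$ of sort $\Thr$ such that $E \Implies \proj{n}{t} = p$; the lemma is then the instance $t \equiv x$. Note that only the equations of $E$ and the projection axioms P0--P3 are needed in the derivation; AIP plays no role here.

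For the base case $n = 0$, axiom P0 gives $\proj{0}{t} = \DeadEnd$, so $p \equiv \DeadEnd$ works regardless of $t$. For the inductive step, assume the claim for $n$ and argue for $n+1$ by cases on the form of $t$. If $t$ is $\DeadEnd$, $\StopP$ or $\StopN$, then by P2, P1a or P1b respectively $\proj{n+1}{t}$ equals that very constant, which is a closed term. If $t \equiv \pcc{t_1}{a}{t_2}$, then by P3 we have $\proj{n+1}{t} = \pcc{\proj{n}{t_1}}{a}{\proj{n}{t_2}}$; since $t_1$ and $t_2$ again have all their variables in $V$, the induction hypothesis supplies closed terms $p_1$ and $p_2$ with $E \Implies \proj{n}{t_i} = p_i$, and hence $E \Implies \proj{n+1}{t} = \pcc{p_1}{a}{p_2}$, which is closed. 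Finally, if $t$ is a variable $y \in V$, then $E$ contains an equation $y = t_y$, so $E \Implies \proj{n+1}{y} = \proj{n+1}{t_y}$, and since $E$ is guarded, $t_y$ is one of $\DeadEnd$, $\StopP$, $\StopN$, $\pcc{t}{a}{t'}$ — never a bare variable — so one of the previous cases already yields a closed $p$ with $E \Implies \proj{n+1}{t_y} = p$, and thus $E \Implies \proj{n+1}{y} = p$.

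The one point that needs care is the variable case: an induction on $n$ that also recurses through variable unfoldings could in principle loop, and guardedness is exactly what rules this out, because unfolding a variable once lands in a constant or a postconditional composition, and in the latter case P3 immediately lowers the projection depth to $n$, where the induction hypothesis applies. Strengthening the statement from variables of $E$ to arbitrary \BTAbt\ terms over $V$ is what legitimises the P3 step, since the immediate subterms $t_1, t_2$ arising there need not themselves be variables. With these observations the induction closes routinely.
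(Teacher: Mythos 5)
Your proof is correct. Note, though, that the paper does not actually spell out an argument for this lemma: it simply defers to the proof of Theorem~1 of~\cite{BM06a} and remarks that it carries over to \BTAbt. What you have written is, in effect, the content that citation stands for, and it is the standard argument: strengthen the statement from variables of $E$ to arbitrary terms over $V$ (which is genuinely necessary, since in a guarded equation $y = \pcc{t}{a}{t'}$ the subterms $t,t'$ need not be variables), induct on $n$ using only P0--P3 and the equations of $E$, and use guardedness to ensure that the variable case performs exactly one unfolding before either terminating in a constant or passing through P3, which lowers the projection depth and hands control back to the induction hypothesis. Your explicit remark that guardedness is what prevents the variable case from looping is exactly the right point to isolate; without it the induction would not close. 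So the proposal is a correct, self-contained replacement for the paper's proof-by-reference, following the route one would expect that reference to take.
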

\begin{proof}
In the case of \BTA, this is proved in~\cite{BM06a} as part of the proof
of Theorem~1 from that paper.
The proof concerned goes through in the case of \BTAbt.
\qed
\end{proof}

The behaviours exhibited on execution by \PGLBbt\ instruction sequences
are considered to be regular threads, with the basic instructions taken
for basic actions.
The \emph{thread extraction} operation $\extr{\ph}$ defines, for each
\PGLBbt\ instruction sequence, the behaviour exhibited on execution by
that \PGLBbt\ instruction sequence.
The thread extraction operation is defined by
$\extr{u_1 \conc \ldots \conc u_k} =
 \extr{1,u_1 \conc \ldots \conc u_k}$,
where $\extr{\ph,\ph}$ is defined by the equations given in
Table~\ref{axioms-thread-extr} (for $a \in \BInstr$ and $l,i \in \Nat$)%
\begin{table}[!t]
\caption{Defining equations for thread extraction operation}
\label{axioms-thread-extr}
\begin{eqntbl}
\begin{aceqns}
\extr{i,u_1 \conc \ldots \conc u_k} & = & \DeadEnd
& \mif \mathrm{not}\; 1 \leq i \leq k \\
\extr{i,u_1 \conc \ldots \conc u_k} & = &
a \bapf \extr{i+1,u_1 \conc \ldots \conc u_k}
& \mif u_i = a \\
\extr{i,u_1 \conc \ldots \conc u_k} & = &
\pcc{\extr{i+1,u_1 \conc \ldots \conc u_k}}{a}
    {\extr{i+2,u_1 \conc \ldots \conc u_k}}
& \mif u_i = +a \\
\extr{i,u_1 \conc \ldots \conc u_k} & = &
\pcc{\extr{i+2,u_1 \conc \ldots \conc u_k}}{a}
    {\extr{i+1,u_1 \conc \ldots \conc u_k}}
& \mif u_i = -a \\
\extr{i,u_1 \conc \ldots \conc u_k} & = &
\extr{i+l,u_1 \conc \ldots \conc u_k}
& \mif u_i = \fjmp{l} \\
\extr{i,u_1 \conc \ldots \conc u_k} & = &
\extr{i \monus l,u_1 \conc \ldots \conc u_k}
& \mif u_i = \bjmp{l} \\
\extr{i,u_1 \conc \ldots \conc u_k} & = & \StopP
& \mif u_i = \haltP \\
\extr{i,u_1 \conc \ldots \conc u_k} & = & \StopN
& \mif u_i = \haltN
\end{aceqns}
\end{eqntbl}
\end{table}
and the rule that $\extr{i,u_1 \conc \ldots \conc u_k} = \DeadEnd$ if
$u_i$ is the beginning of an infinite jump chain.%
\footnote
{This rule can be formalized, cf.~\cite{BM07g}.}

\section{Interaction between Threads and Services}
\label{sect-TAbt}

A thread may perform a basic action for the purpose of requesting a
named service to process a method and to return a reply value at
completion of the processing of the method.
In this section, we extend \BTAbt\ such that this kind of interaction
between threads and services can be dealt with, resulting in \TAbt.
This involves the introduction of service families: collections of named
services.

It is assumed that a fixed but arbitrary set $\Meth$ of \emph{methods}
has been given.
Methods play the role of commands.
A service is able to process certain methods.
The processing of a method may involve a change of the service.
At completion of the processing of a method, the service produces a
reply value.
The set $\Replies$ of \emph{reply values} is the set
$\set{\True,\False,\Div}$.

In \SFA, the algebraic theory of service families introduced below, the
following is assumed with respect to services:
\begin{itemize}
\item
a set $\Services$ of services has been given together with:
\begin{itemize}
\item
for each $m \in \Meth$,
a total function $\funct{\effect{m}}{\Services}{\Services}$;
\item
for each $m \in \Meth$,
a total function $\funct{\sreply{m}}{\Services}{\Replies}$;
\end{itemize}
satisfying the condition that there exists a unique $S \in \Services$
with $\effect{m}(S) = S$ and $\sreply{m}(S) = \Div$ for all
$m \in \Meth$;
\item
a signature $\Sig{\Services}$ has been given that includes the following
sort:
\begin{itemize}
\item
the sort $\Serv$ of \emph{services};
\end{itemize}
and the following constant and operators:
\begin{itemize}
\item
the \emph{empty service} constant $\const{\emptyserv}{\Serv}$;
\item
for each $m \in \Meth$,
the \emph{derived service} operator
$\funct{\derive{m}}{\Serv}{\Serv}$;
\end{itemize}
\item
$\Services$ and $\Sig{\Services}$ are such that:
\begin{itemize}
\item
each service in $\Services$ can be denoted by a closed term of sort
$\Serv$;
\item
the constant $\emptyserv$ denotes the unique $S \in \Services$ such
that $\effect{m}(S) = S$ and $\sreply{m}(S) = \Div$ for all
$m \in \Meth$;
\item
if closed term $t$ denotes service $S$, then $\derive{m}(t)$ denotes
service $\effect{m}(S)$.
\end{itemize}
\end{itemize}

When a request is made to service $S$ to process method $m$:
\begin{itemize}
\item
if $\sreply{m}(S) \neq \Div$, then $S$ processes $m$, produces the reply
$\sreply{m}(S)$, and next proceeds as $\effect{m}(S)$;
\item
if $\sreply{m}(S) = \Div$, then $S$ rejects the request to process
method $m$.
\end{itemize}
The unique service $S$ such that $\effect{m}(S) = S$ and
$\sreply{m}(S) = \Div$ for all $m \in \Meth$ is called the \emph{empty}
service.
It is the service that is unable to process any method.

It is also assumed that a fixed but arbitrary non-empty finite set
$\Foci$ of \emph{foci} has been given.
Foci play the role of names of services in the service family offered by
an execution environment.
A service family is a set of named services where each name occurs only
once.

\SFA\ has the sorts, constants and operators in $\Sig{\Services}$
and in addition the following sort:
\begin{itemize}
\item
the sort $\ServFam$ of \emph{service families};
\end{itemize}
and the following constant and operators:
\begin{itemize}
\item
the \emph{empty service family} constant $\const{\emptysf}{\ServFam}$;
\item
for each $f \in \Foci$, the unary \emph{singleton service family}
operator $\funct{\mathop{f{.}} \ph}{\Serv}{\ServFam}$;
\item
the binary \emph{service family composition} operator
$\funct{\ph \sfcomp \ph}{\ServFam \x \ServFam}{\ServFam}$;
\item
for each $F \subseteq \Foci$, the unary \emph{encapsulation} operator
$\funct{\encap{F}}{\ServFam}{\ServFam}$.
\end{itemize}
We assume that there is a countably infinite set of variables of sort
$\ServFam$ which includes $u,v,w$.
Terms are built as usual in the many-sorted case
(see e.g.~\cite{ST99a,Wir90a}).\linebreak[2]
We use prefix notation for the singleton service family operators and
infix nota\-tion for the service family composition operator.

The service family denoted by $\emptysf$ is the empty service family.
The service family denoted by a closed term of the form $f.H$ consists
of one named service only, the service concerned is the service denoted
by $H$, and the name of this service is $f$.
The service family denoted by a closed term of the form $C \sfcomp D$
consists of all named services that belong to either the service family
denoted by $C$ or the service family denoted by $D$.
In the case where a named service from the service family denoted by $C$
and a named service from the service family denoted by $D$ have the same
name, they collapse to an empty service with the name concerned.
The service family denoted by a closed term of the form $\encap{F}(C)$
consists of all named services with a name not in $F$ that belong to the
service family denoted by $C$.

The axioms of \SFA\ are given in Table~\ref{axioms-SFA}.%
\begin{table}[!t]
\caption{Axioms of \SFA}
\label{axioms-SFA}
\begin{eqntbl}
\begin{axcol}
u \sfcomp \emptysf = u                                 & \axiom{SFC1} \\
u \sfcomp v = v \sfcomp u                              & \axiom{SFC2} \\
(u \sfcomp v) \sfcomp w = u \sfcomp (v \sfcomp w)      & \axiom{SFC3} \\
f.H \sfcomp f.H' = f.\emptyserv                        & \axiom{SFC4}
\end{axcol}
\qquad
\begin{saxcol}
\encap{F}(\emptysf) = \emptysf                       & & \axiom{SFE1} \\
\encap{F}(f.H) = \emptysf & \mif f \in F               & \axiom{SFE2} \\
\encap{F}(f.H) = f.H      & \mif f \notin F            & \axiom{SFE3} \\
\encap{F}(u \sfcomp v) =
\encap{F}(u) \sfcomp \encap{F}(v)                    & & \axiom{SFE4}
\end{saxcol}
\end{eqntbl}
\end{table}
In this table, $f$ stands for an arbitrary focus from $\Foci$ and $H$
and $H'$ stand for arbitrary closed terms of sort $\Serv$.
The axioms of \SFA\ simply formalize the informal explanation given above.

Below we will introduce two operators related to the interaction between
threads and services.
They are called the apply operator and the reply operator.
The apply operator is concerned with the effects of threads on service
families and therefore produces service families.
The reply operator is concerned with the effects of service families on
the Boolean values that threads deliver at their termination.
The reply operator does not only produce Boolean values: it produces a
special value in cases where no termination takes place.

For the set $\BAct$ of basic actions, we take the set
$\set{f.m \where f \in \Foci, m \in \Meth}$.
Both operators mentioned above relate to the processing of methods by
services from a service family in pursuance of basic actions performed
by a thread.
The service involved in the processing of a method is the service whose
name is the focus of the basic action in question.

\TAbt\ has the sorts, constants and operators of both \BTAbt\ and \SFA,
and in addition the following sort:
\begin{itemize}
\item
the sort $\Repl$ of \emph{replies};
\end{itemize}
and the following constants and operators:
\begin{itemize}
\item
the \emph{reply} constants $\const{\True,\False,\Div}{\Repl}$;
\item
the binary \emph{apply} operator
$\funct{\ph \sfapply \ph}{\Thr \x \ServFam}{\ServFam}$;
\item
the binary \emph{reply} operator
$\funct{\ph \sfreply \ph}{\Thr \x \ServFam}{\Repl}$.
\end{itemize}
We use infix notation for the apply and reply operators.

The service family denoted by a closed term of the form $p \sfapply C$
and the reply denoted by a closed term of the form $p \sfreply C$ are
the service family and reply, respectively, that result from processing
the method of each basic action with a focus of the service family
denoted by $C$ that the thread denoted by $p$ performs, where the
processing is done by the service in that service family with the focus
of the basic action as its name.
When the method of a basic action performed by a thread is processed by
a service, the service changes in accordance with the method concerned,
and affects the thread as follows: the two ways to proceed reduce to one
on the basis of the reply value produced by the service.
The reply is the Boolean value that the thread denoted by $p$ delivers
at termination if it terminates and the value $\Div$ (standing for
divergent) if it does not terminate.

The axioms of \TAbt\ are the axioms of \BTAbt, the axioms of \SFA, and
the axioms given in Tables~\ref{axioms-apply} and~\ref{axioms-reply}.%
\begin{table}[!t]
\caption{Axioms for apply operator}
\label{axioms-apply}
\begin{eqntbl}
\begin{saxcol}
\StopP \sfapply u = u                                  & & \axiom{A1} \\
\StopN \sfapply u = u                                  & & \axiom{A2} \\
\DeadEnd \sfapply u = \emptysf                         & & \axiom{A3} \\
(\Tau \bapf x) \sfapply u = x \sfapply u               & & \axiom{A4} \\
(\pcc{x}{f.m}{y}) \sfapply \encap{\set{f}}(u) = \emptysf
                                                       & & \axiom{A5} \\
(\pcc{x}{f.m}{y}) \sfapply (f.H \sfcomp \encap{\set{f}}(u)) =
x \sfapply (f.\derive{m}H \sfcomp \encap{\set{f}}(u))
                           & \mif \sreply{m}(H) = \True  & \axiom{A6} \\
(\pcc{x}{f.m}{y}) \sfapply (f.H \sfcomp \encap{\set{f}}(u)) =
y \sfapply (f.\derive{m}H \sfcomp \encap{\set{f}}(u))
                           & \mif \sreply{m}(H) = \False & \axiom{A7} \\
(\pcc{x}{f.m}{y}) \sfapply (f.H \sfcomp \encap{\set{f}}(u)) = \emptysf
                           & \mif \sreply{m}(H) = \Div   & \axiom{A8} \\
\AND{n \geq 0}{} \proj{n}{x} \sfapply u = \proj{n}{y} \sfapply v
                 \Implies x \sfapply u = y  \sfapply v & & \axiom{A9}
\end{saxcol}
\end{eqntbl}
\end{table}%
\begin{table}[!t]
\caption{Axioms for reply operator}
\label{axioms-reply}
\begin{eqntbl}
\begin{saxcol}
\StopP \sfreply u = \True                              & & \axiom{R1} \\
\StopN \sfreply u = \False                             & & \axiom{R2} \\
\DeadEnd \sfreply u = \Div                             & & \axiom{R3} \\
(\Tau \bapf x) \sfreply u = x \sfreply u               & & \axiom{R4} \\
(\pcc{x}{f.m}{y}) \sfreply \encap{\set{f}}(u) = \Div   & & \axiom{R5} \\
(\pcc{x}{f.m}{y}) \sfreply (f.H \sfcomp \encap{\set{f}}(u)) =
x \sfreply (f.\derive{m}H \sfcomp \encap{\set{f}}(u))
                           & \mif \sreply{m}(H) = \True  & \axiom{R6} \\
(\pcc{x}{f.m}{y}) \sfreply (f.H \sfcomp \encap{\set{f}}(u)) =
y \sfreply (f.\derive{m}H \sfcomp \encap{\set{f}}(u))
                           & \mif \sreply{m}(H) = \False & \axiom{R7} \\
(\pcc{x}{f.m}{y}) \sfreply (f.H \sfcomp \encap{\set{f}}(u)) = \Div
                           & \mif \sreply{m}(H) = \Div   & \axiom{R8} \\
\AND{n \geq 0}{} \proj{n}{x} \sfreply u = \proj{n}{y} \sfreply v
                 \Implies x \sfreply u = y  \sfreply v & & \axiom{R9}
\end{saxcol}
\end{eqntbl}
\end{table}
In these tables, $f$ stands for an arbitrary focus from $\Foci$, $m$
stands for an arbitrary method from $\Meth$, $H$ stands for an arbitrary
term of sort $\Serv$, and $n$ stands for an arbitrary natural number.
The axioms simply formalize the informal explanation given above and in
addition stipulate what is the result of apply and reply if
inappropriate foci or methods are involved.
Axioms A9 and R9 allow for reasoning about infinite threads in the
contexts of apply and reply, respectively.

Let $p$ and $C$ be \TAbt\ terms of sort $\Thr$ and $\ServFam$,
respectively.
Then $p$ \emph{converges on} $C$, written $p \cvg C$, is inductively
defined by the following clauses:
\begin{itemize}
\item
$\StopP \cvg u$ and $\StopN \cvg u$;
\item
if $x \cvg u$, then $(\Tau \bapf x) \cvg u$;
\item
if $\sreply{m}(H) = \True\hsp{-.025}$ and
$x \cvg (f.\derive{m}H \sfcomp \encap{\set{f}}(u))$, then
$(\pcc{x}{f.m}{y}) \cvg (f.H \sfcomp \encap{\set{f}}(u))$;
\item
if $\sreply{m}(H) = \False$ and
$y \cvg (f.\derive{m}H \sfcomp \encap{\set{f}}(u))$, then
$(\pcc{x}{f.m}{y}) \cvg (f.H \sfcomp \encap{\set{f}}(u))$;
\item
if $\proj{n}{x} \cvg u$, then $x \cvg u$.
\end{itemize}
Moreover, $p$ \emph{diverges on} $C$, written $p \dvg C$, is defined
by $p \dvg C$ iff not $p \cvg C$.

In the case where $p \dvg C$, either the processing of methods does
not halt or inappropriate foci or methods are involved.
In that case, there is nothing that we intend to denote by a term of the
form $p \sfapply C$ or $p \sfreply C$.
We propose to comply with the following \emph{relevant use conventions}:
\begin{itemize}
\item
$p \sfapply C$ is only used if it is known that $p \cvg C$;
\item
$p \sfreply C$ is only used if it is known that $p \cvg C$.
\end{itemize}
The condition found in the first convention is justified by the fact
that in the intended model of \TAbt, for definable threads $x$,
$x \sfapply u = \emptysf$ if $x \dvg u$ (see~\cite{BM09k}).
We do not have $x \sfapply u = \emptysf$ only if $x \dvg u$.
For instance, $\StopP \sfapply \emptysf = \emptysf$ whereas
$\StopP \cvg \emptysf$.

\section{Interaction between Programs and Services}
\label{sect-PGLBbt-add}

In this paper, the apply operator and reply operator are primarily
intended to be used in the setting of \PGLBbt.
In this section, we introduce the apply operator and reply operator in
the setting of \PGLBbt.
We also introduce notations for two simple transformations of \PGLBbt\
instruction sequences that affect only their termination behaviour on
execution and the Boolean value yielded at termination in the case of
termination.
These notations will be used in Sections~\ref{sect-interpreters}
and~\ref{sect-autosolvability}.

We introduce the apply operator and reply operator in the setting of
\PGAbt\ by defining:
\begin{ldispl}
x \sfapply u = \extr{x} \sfapply u\;, \quad
x \sfreply u = \extr{x} \sfreply u
\end{ldispl}
for all \PGLBbt\ instruction sequences $x$.
Similarly, we introduce convergence in the setting of \PGAbt\ by
defining:
\begin{ldispl}
x \cvg u = \extr{x} \cvg u
\end{ldispl}
for all \PGLBbt\ instruction sequences $x$.

The following proposition states that convergence corresponds with
termination.
\begin{proposition}
\label{prop-cvg-sfreply}
Let $x$ be a \PGLBbt\ instruction sequence.
Then $x \cvg u$ iff $x \sfreply u = \True$ or $x \sfreply u = \False$.
\end{proposition}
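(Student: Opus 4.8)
Since $x \cvg u$ and $x \sfreply u$ are defined as $\extr{x} \cvg u$ and $\extr{x} \sfreply u$, it suffices to prove the statement with $\extr{x}$ --- henceforth written $p$ --- in place of $x$. The plan is to exploit that $p$ is a regular thread, hence the solution of a finite guarded recursive specification, so that by Lemma~\ref{lemma-projections} each projection $\proj{n}{p}$ equals a closed \BTAbt\ term of sort $\Thr$, that is, a \emph{finite} thread. The first step would be to prove, by structural induction on a closed \BTAbt\ term $q$ of sort $\Thr$, the statement for finite threads: for every closed term $C$ of sort $\ServFam$, $q \cvg C$ iff $q \sfreply C \in \set{\True,\False}$. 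The cases where $q$ is $\DeadEnd$, $\StopP$ or $\StopN$ are immediate from R1--R3 and the inductive definition of $\cvg$ (for $\DeadEnd$ one notes that $\DeadEnd \cvg C$ is not derivable). For $q$ of the form $\pcc{q_1}{\Tau}{q_2}$ one uses T1, R4 and the induction hypothesis for $q_1$. For $q$ of the form $\pcc{q_1}{f.m}{q_2}$ one brings $C$ into the normal form $f_1.H_1 \sfcomp \ldots \sfcomp f_k.H_k$ with distinct $f_i$ that the axioms of \SFA\ guarantee, and distinguishes whether some $f_i$ equals $f$: if not, then $C = \encap{\set{f}}(C)$, so R5 gives $q \sfreply C = \Div$ while no clause of the definition of $\cvg$ applies, whence $q \dvg C$; if so, with service $H$ at $f$, then according to whether $\sreply{m}(H)$ is $\Div$, $\True$ or $\False$ one invokes R8, or R6 with the induction hypothesis for $q_1$, or R7 with the induction hypothesis for $q_2$, and in each case the matching clause for $\cvg$ delivers the corresponding behaviour.

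Next I would establish two auxiliary facts about $p$ and a closed term $C$ of sort $\ServFam$. First, $p \cvg C$ iff $\proj{n}{p} \cvg C$ for some $n$: the implication from right to left is the last clause of the definition of $\cvg$, and the converse follows by induction on the derivation of $p \cvg C$, lifting a witness $n$ through the clauses by means of P1a, P1b and P3. Second, if $\proj{n}{p} \cvg C$ then $p \sfreply C = \proj{n}{p} \sfreply C$: this goes by induction on $n$ with a case analysis on the form of $p$, using that every thread is $\DeadEnd$, $\StopP$, $\StopN$, or of the form $\pcc{q_1}{a}{q_2}$; the base case $n = 0$ is vacuous since $\proj{0}{p} = \DeadEnd \dvg C$, and in the cases $a = \Tau$ and $a = f.m$ one applies the induction hypothesis at $n - 1$ to a subthread ($q_1$ when $a = \Tau$, and $q_1$ or $q_2$ when $a = f.m$, depending on $\sreply{m}(H)$), using P3 together with R4, R6 or R7, while the subcases of $a = f.m$ that make $\proj{n}{p}$ diverge on $C$ are again vacuous.

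The proof is then assembled as follows. For the implication from left to right: if $p \cvg C$, the first auxiliary fact yields an $n$ with $\proj{n}{p} \cvg C$; since $\proj{n}{p}$ equals a closed \BTAbt\ term, the finite-thread statement gives $\proj{n}{p} \sfreply C \in \set{\True,\False}$, and by the second auxiliary fact $p \sfreply C = \proj{n}{p} \sfreply C \in \set{\True,\False}$. For the converse I would argue contrapositively: if $p \dvg C$, then by the first auxiliary fact $\proj{n}{p} \dvg C$ for all $n$, so by the finite-thread statement $\proj{n}{p} \sfreply C = \Div$ for all $n$; since $\proj{n}{\DeadEnd} = \DeadEnd$ by P0 and P2 and $\DeadEnd \sfreply C = \Div$ by R3, axiom R9 applied with $\DeadEnd$ for $y$ and $C$ for $v$ gives $p \sfreply C = \DeadEnd \sfreply C = \Div$, so $p \sfreply C \notin \set{\True,\False}$.

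I expect the main obstacle to be the postconditional case where $q$ is of the form $\pcc{q_1}{f.m}{q_2}$ in the finite-thread induction, and specifically the fact that a service family containing no service named $f$ cannot be written in the form $f.H \sfcomp \encap{\set{f}}(v)$. This is what reconciles the behaviour of $\cvg$ with that of $\sfreply$ in this case (there $q \dvg C$ and $q \sfreply C = \Div$), and it rests on the intended model of \SFA, in which an empty service named $f$ is not identified with the absence of a service named $f$. The rest is routine but calls for careful bookkeeping of the normal forms of \SFA\ and of the clauses of the inductive definition of $\cvg$.
\qed
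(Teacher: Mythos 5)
Your proposal is correct and follows essentially the same route as the paper: the paper's proof reduces the claim to closed \BTAbt\ terms via the definition of $\extr{\ph}$, the last clause of the inductive definition of $\cvg$, axiom R9 and Lemma~\ref{lemma-projections}, and then settles the closed-term case by structural induction, which is precisely your decomposition (your two auxiliary facts are the details the paper leaves implicit in that reduction). Your observations about the case $\pcc{q_1}{f.m}{q_2}$ with no $f$-named service in the family, and about using R9 with $\DeadEnd$ for the divergent direction, are exactly the points that make the sketch go through.
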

\begin{proof}
By the definition of $\extr{\ph}$, the last clause of the inductive
definition of $\cvg$, axiom R9, and Lemma~\ref{lemma-projections} it is
sufficient to prove $x \cvg u$ iff $p \sfreply u = \True$ or
$p \sfreply u = \False$ for each closed \BTAbt\ term $p$ of sort $\Thr$.
This is easy by induction on the structure of $p$.
\qed
\end{proof}

In Sections~\ref{sect-interpreters} and~\ref{sect-autosolvability}, we
will make use of two simple transformations of \PGLBbt\ instruction
sequences.
Here, we introduce notations for those transformations.

Let $x$ be a \PGLBbt\ instruction sequence.
Then we write $\swap(x)$ for $x$ with each occurrence of $\haltP$
replaced by $\haltN$ and each occurrence of $\haltN$ replaced by
$\haltP$, and we write $\ftod(x)$ for $x$ with each occurrence of
$\haltN$ replaced by $\fjmp{0}$.
In the following proposition, the most important properties relating to
these transformations are stated.
\begin{proposition}
\label{prop-swap-f2d}
Let $x$ be a \PGLBbt\ instruction sequence.
Then:
\begin{enumerate}
\item
if $x \sfreply u = \True$ then $\swap(x) \sfreply u = \False$ and
$\ftod(x) \sfreply u = \True$;
\item
if $x \sfreply u = \False$ then $\swap(x) \sfreply u = \True$ and
$\ftod(x) \sfreply u = \Div$.
\end{enumerate}
\end{proposition}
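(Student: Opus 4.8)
The plan is to reduce the statement about \PGLBbt\ instruction sequences to a statement about the closed \BTAbt\ terms that represent the approximations of the extracted threads, exactly as in the proof of Proposition~\ref{prop-cvg-sfreply}. By the definition of the apply and reply operators in the setting of \PGLBbt, we have $x \sfreply u = \extr{x} \sfreply u$, and by Lemma~\ref{lemma-projections} together with axiom R9 it suffices to establish the corresponding facts for the projections $\proj{n}{\extr{x}}$, which are representable by closed \BTAbt\ terms of sort $\Thr$. The key observation is that thread extraction commutes in the appropriate sense with the two syntactic transformations: because $\swap$ and $\ftod$ touch only the termination instructions $\haltP$ and $\haltN$, and because the defining equations for $\extr{\ph,\ph}$ in Table~\ref{axioms-thread-extr} treat $\haltP$ and $\haltN$ by the clauses $\extr{i,\ldots} = \StopP$ and $\extr{i,\ldots} = \StopN$ respectively, one sees that $\extr{\swap(x)}$ is obtained from $\extr{x}$ by interchanging the constants $\StopP$ and $\StopN$, while $\extr{\ftod(x)}$ is obtained from $\extr{x}$ by replacing $\StopN$ with $\DeadEnd$ (noting that $\extr{i,\ldots} = \DeadEnd$ is precisely the clause triggered by $\fjmp{0}$, since $i \monus 0 = i$ gives an infinite jump chain — or, more carefully, $\fjmp{0}$ causes deadlock by the proviso in the thread-extraction rules).

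First I would make the previous sentence precise by a small structural lemma: for every $i$ and every instruction sequence $u_1 \conc \ldots \conc u_k$, the term $\extr{i, \swap(u_1 \conc \ldots \conc u_k)}$ equals $\extr{i, u_1 \conc \ldots \conc u_k}$ with $\StopP$ and $\StopN$ swapped, and similarly for $\ftod$; this follows by inspecting the defining equations case by case, the only interesting cases being $u_i = \haltP$ and $u_i = \haltN$. Then I would combine this with the behaviour of the reply operator on the relevant closed terms: $\StopP \sfreply u = \True$ (R1), $\StopN \sfreply u = \False$ (R2), and $\DeadEnd \sfreply u = \Div$ (R3), and these are the only three "leaves" where a reply value is produced, all other axioms R4–R8 merely propagating a reply along postconditional composition and $\Tau$-prefixing unchanged. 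Consequently, if $\extr{x} \sfreply u = \True$, the thread $\extr{x}$ reaches a $\StopP$-leaf along the (unique, reply-determined) computation path selected by $u$; swapping turns that leaf into $\StopN$, giving $\swap(x) \sfreply u = \False$, and $\ftod$ leaves that leaf untouched (it is a $\StopP$, not a $\StopN$), giving $\ftod(x) \sfreply u = \True$. Symmetrically, if $\extr{x} \sfreply u = \False$, the selected path reaches a $\StopN$-leaf; swapping turns it into $\StopP$, and $\ftod$ turns it into $\DeadEnd$, whence $\swap(x) \sfreply u = \True$ and $\ftod(x) \sfreply u = \Div$ by R2, R1, R3.

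To turn this into a rigorous argument I would phrase the last step as an induction on the structure of a closed \BTAbt\ term $p$ representing $\proj{n}{\extr{x}}$, proving simultaneously that: if $p \sfreply u \in \set{\True,\False}$ then the "swapped" term $p'$ satisfies $p' \sfreply u$ equals the Boolean dual, and the "$\StopN \mapsto \DeadEnd$" term $p''$ satisfies $p'' \sfreply u = p \sfreply u$ when $p \sfreply u = \True$ and $p'' \sfreply u = \Div$ when $p \sfreply u = \False$. The base cases $p \in \set{\StopP, \StopN, \DeadEnd}$ are immediate from R1–R3; the inductive case $p = \pcc{q}{f.m}{r}$ splits on $\sreply{m}(H)$ via R5–R8, and since the transformations commute with postconditional composition and with the choice of continuation, the induction hypothesis applies to whichever of $q$, $r$ is selected. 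Lifting from projections back to $\extr{x}$ is handled by R9, and lifting from $\extr{x}$ back to $x$ is the definition of the \PGLBbt-level operators; one also needs that the syntactic transformation and projection commute with thread extraction, which is the structural lemma above applied uniformly in $n$. I expect the main obstacle to be bookkeeping rather than depth: one must be careful that $\ftod$ genuinely yields $\DeadEnd$-behaviour — i.e.\ that replacing $\haltN$ by $\fjmp{0}$ and not by, say, a jump out of range produces exactly the $\DeadEnd$ clause of thread extraction — and that the "swap $\StopP \leftrightarrow \StopN$" description of $\extr{\swap(x)}$ is literally an equality of \BTAbt\ terms, so that the reply-operator axioms can be applied verbatim. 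Once those two points are nailed down, the statement follows.
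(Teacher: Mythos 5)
Your proposal is correct and follows essentially the same route as the paper's proof: establish that thread extraction commutes with the syntactic transformations (i.e.\ $\extr{i,\swap(x)}$ is $\extr{i,x}$ with $\StopP$ and $\StopN$ interchanged, and $\extr{i,\ftod(x)}$ is $\extr{i,x}$ with $\StopN$ replaced by $\DeadEnd$), reduce to closed \BTAbt\ terms via Lemma~\ref{lemma-projections} and axiom R9, and finish by structural induction using R1--R3 at the leaves. The only quibble is the aside about $\fjmp{0}$: the relevant clause is $\extr{i,\ldots}=\extr{i+l,\ldots}$ with $l=0$ (not the $\monus$ clause for backward jumps), but either way the self-loop is an infinite jump chain and yields $\DeadEnd$, as you conclude.
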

\begin{proof}
Let $p$ be a closed \BTAbt\ term of sort $\Thr$.
Then we write $\swap'(p)$ for $p$ with each occurrence of $\StopP$
replaced by $\StopN$ and each occurrence of $\StopN$ replaced by
$\StopP$, and we write $\ftod'(p)$ for $p$ with each occurrence of
$\StopN$ replaced by $\DeadEnd$.
It is easy to prove by induction on $i$ that
$\extr{i,\swap(x)} = \swap'(\extr{i,x})$ and
$\extr{i,\ftod(x)} = \ftod'(\extr{i,x})$ for all $i \in \Nat$.
By this result, axiom R9, and Lemma~\ref{lemma-projections} it is
sufficient to prove the following for each closed \BTAbt\ term $p$ of
sort $\Thr$:
\begin{enumerate}
\item[]
if $p \sfreply u = \True$ then $\swap'(p) \sfreply u = \False$ and
$\ftod'(p) \sfreply u = \True$;
\item[]
if $p \sfreply u = \False$ then $\swap'(p) \sfreply u = \True$ and
$\ftod'(p) \sfreply u = \Div$.
\end{enumerate}
This is easy by induction on the structure of $p$.
\qed
\end{proof}

\section{Functional Units}
\label{sect-func-unit}

In this section, we introduce the concept of a functional unit and
related concepts.
The concept of a functional unit was first introduced in~\cite{BM09l}.

It is assumed that a non-empty set $\FUS$ of \emph{states} has been
given.
As before, it is assumed that a non-empty finite set $\MN$ of methods
has been given.
However, in the setting of functional units, methods serve as names of
operations on a state space.
For that reason, the members of $\MN$ will henceforth be called
\emph{method names}.

A \emph{method operation} on $\FUS$ is a total function from $\FUS$ to
$\Bool \x \FUS$.
A \emph{partial method operation} on $\FUS$ is a partial function from
$\FUS$ to $\Bool \x \FUS$.
We write $\MO(\FUS)$ for the set of all method operations on $\FUS$.
We write $M^r$ and $M^e$, where $M \in \MO(\FUS)$, for the unique
functions $\funct{R}{\FUS}{\Bool}$ and $\funct{E}{\FUS}{\FUS}$,
respectively, such that $M(s) = \tup{R(s),E(s)}$ for all $s \in \FUS$.

A \emph{functional unit} for $\FUS$ is a finite subset $\cH$ of
$\MN \x \MO(\FUS)$ such that \mbox{$\tup{m,M} \in \cH$} and
$\tup{m,M'} \in \cH$ implies $M = M'$.
We write $\FU(\FUS)$ for the set of all functional units for $\FUS$.
We write $\IF(\cH)$, where $\cH \in \FU(\FUS)$, for the set
$\set{m \in \MN \where \Exists{M \in \MO(\FUS)}{\tup{m,M} \in \cH}}$.
We write $m_\cH$, where $\cH \in \FU(\FUS)$ and $m \in \IF(\cH)$, for
the unique $M \in \MO(\FUS)$ such that $\tup{m,M} \in \cH$.

We look upon the set $\IF(\cH)$, where $\cH \in \FU(\FUS)$, as the
interface of $\cH$.
It looks to be convenient to have a notation for the restriction of a
functional unit to a subset of its interface.
We write $\tup{I,\cH}$, where $\cH \in \FU(\FUS)$ and
$I \subseteq \IF(\cH)$, for the functional unit
$\set{\tup{m,M} \in \cH \where m \in I}$.

Let $\cH \in \FU(\FUS)$.
Then an \emph{extension} of $\cH$ is an $\cH' \in \FU(\FUS)$ such that
$\cH \subseteq \cH'$.

The following is a simple illustration of the use of functional units.
An unbounded counter can be modelled by a functional unit for $\Nat$
with method operations for set to zero, increment by one, decrement by
one, and test on zero.

According to the definition of a functional unit,
$\emptyset \in \FU(\FUS)$.
By that we have a unique functional unit with an empty interface, which
is not very interesting in itself.
However, when considering services that behave according to functional
units, $\emptyset$ is exactly the functional unit according to which the
empty service $\emptyserv$ (the service that is not able to process any
method) behaves.

The method names attached to method operations in functional units
should not be confused with the names used to denote specific method
operations in describing functional units.
Therefore, we will comply with the convention to use names beginning
with a lower-case letter in the former case and names beginning with an
upper-case letter in the latter case.

We will use \PGLBbt\ instruction sequences to derive partial method
operations from the method operations of a functional unit.
We write $\Lf{I}$, where $I \subseteq \MN$, for the set of all \PGLBbt\
instruction sequences, taking the set $\set{f.m \where m \in I}$ as the
set $\BInstr$ of basic instructions.

The derivation of partial method operations from the method operations
of a functional unit involves services whose processing of methods
amounts to replies and service changes according to corresponding method
operations of the functional unit concerned.
These services can be viewed as the behaviours of a machine, on which
the processing in question takes place, in its different states.
We take the set $\FU(\FUS) \x \FUS$ as the set $\Services$ of services.
We write $\cH(s)$, where $\cH \in \FU(\FUS)$ and $s \in \FUS$, for the
service $\tup{\cH,s}$.
The functions $\effect{m}$ and $\sreply{m}$ are defined as follows:
\begin{ldispl}
\begin{aeqns}
\effect{m}(\cH(s)) & = &
\Biggl\{
\begin{array}[c]{@{}l@{\;\;}l@{}}
\cH(m_\cH^e(s))            & \mif m \in \IF(\cH) \\
{\emptyset}(s')            & \mif m \notin \IF(\cH)\;,
\end{array}
\beqnsep
\sreply{m}(\cH(s))  & = &
\Biggl\{
\begin{array}[c]{@{}l@{\;\;}l@{}}
m_\cH^r(s) \phantom{\cH()} & \mif m \in \IF(\cH) \\
\Div                       & \mif m \notin \IF(\cH)\;,
\end{array}
\end{aeqns}
\end{ldispl}
where $s'$ is a fixed but arbitrary state in $S$.
We assume that each $\cH(s) \in \Services$ can be denoted by a closed
term of sort $\Serv$.
In this connection, we use the following notational convention: for each
$\cH(s) \in \Services$, we write $\cterm{\cH(s)}$ for an arbitrary
closed term of sort $\Thr$ that denotes $\cH(s)$.
The ambiguity thus introduced could be obviated by decorating $\cH(s)$
wherever it stands for a closed term.
However, in this paper, it is always immediately clear from the context
whether it stands for a closed term.
Moreover, we believe that the decorations are more often than not
distracting.
Therefore, we leave it to the reader to make the decorations mentally
wherever appropriate.

Let $\cH \in \FU(\FUS)$, and let $I \subseteq \IF(\cH)$.
Then an instruction sequence $x \in \Lf{I}$ produces a partial method
operation $\moextr{x}{\cH}$ as follows:
\begin{ldispl}
\begin{aceqns}
\moextr{x}{\cH}(s) & = &
\tup{\moextr{x}{\cH}^r(s),\moextr{x}{\cH}^e(s)}
 & \mif \moextr{x}{\cH}^r(s) = \True \Or
        \moextr{x}{\cH}^r(s) = \False\;, \\
\moextr{x}{\cH}(s) & \mathrm{is} & \mathrm{undefined}
 & \mif \moextr{x}{\cH}^r(s) = \Div\;,
\end{aceqns}
\end{ldispl}
where
\begin{ldispl}
\begin{aeqns}
\moextr{x}{\cH}^r(s) & = & x \sfreply f.\cterm{\cH(s)}\;, \\
\moextr{x}{\cH}^e(s) & = &
\mathrm{the\;unique}\; s' \in S\; \mathrm{such\;that}\;
 x \sfapply f.\cterm{\cH(s)} = f.\cterm{\cH(s')}\;.
\end{aeqns}
\end{ldispl}
If $\moextr{x}{\cH}$ is total, then it is called a
\emph{derived method operation} of $\cH$.

The binary relation $\below$ on $\FU(\FUS)$ is defined by
$\cH \below \cH'$ iff for all $\tup{m,M} \in \cH$, $M$ is a derived
method operation of $\cH'$.
The binary relation $\equiv$ on $\FU(\FUS)$ is defined by
$\cH \equiv \cH'$ iff $\cH \below \cH'$ and $\cH' \below \cH$.
In~\cite{BM09l}, it is proved that $\below$ is a quasi-order relation
and $\equiv$ is an equivalence relation.

\section{Functional Units Relating to Turing Machine Tapes}
\label{sect-func-unit-sbs}

In this section, we define some notions that have a bearing on the
halting problem in the setting of \PGLBbt\ and functional units.
The notions in question are defined in terms of functional units for the
following state space:
\begin{ldispl}
\SBS = \set{v \pebble w \where v,w \in \seqof{\set{0,1,\sep}}}\;.
\end{ldispl}

The states from $\SBS$ resemble the possible contents of the tape of a
Turing machine whose tape alphabet is $\set{0,1,\sep}$.
Consider a state $v \pebble w \in \SBS$.
Then $v$ corresponds to the content of the tape to the left of the
position of the tape head and $w$ corresponds to the content of the tape
from the position of the tape head to the right -- the indefinite
numbers of padding blanks at both ends are left out.
The colon serves as a seperator of bit sequences.
This is for instance useful if the input of a program consists of
another program and an input to the latter program, both encoded as a
bit sequences.

A method operation $M \in \MO(\SBS)$ is \emph{recursive} if there exist
recursive functions $\funct{F,G}{\Nat}{\Nat}$ such that
$M(v) = \tup{\beta(F(\alpha(v))),\alpha^{-1}(G(\alpha(v)))}$ for all
$v \in \SBS$,
where $\funct{\alpha}{\SBS}{\Nat}$ is a bijection and
$\funct{\beta}{\Nat}{\Bool}$ is inductively defined by $Z(0) = \True$
and $Z(n + 1) = \False$.
A functional unit $\cH \in \FU(\SBS)$ is \emph{recursive} if, for each
$\tup{m,M} \in \cH$, $M$ is recursive.

In the sequel, we will comply with the relevant use conventions
introduced at the end of Section~\ref{sect-TAbt}.

It is assumed that, for each $\cH \in \FU(\SBS)$, an injective function
from $\Lf{\IF(\cH)}$ to $\seqof{\set{0,1}}$ has been given that yields
for each $x \in \Lf{\IF(\cH)}$, an encoding of $x$ as a bit sequence.
We use the notation $\ol{x}$ to denote the encoding of $x$ as a bit
sequence.

Let $\cH \in \FU(\SBS)$, and let $I \subseteq \IF(\cH)$.
Then:
\begin{itemize}
\item
$x \in \Lf{\IF(\cH)}$ produces a
\emph{solution of the halting problem} for $\Lf{I}$ with respect to
$\cH$ if:
\begin{ldispl}
x \cvg f.\cterm{\cH(v)}\; \mathrm{for\; all}\; v \in \SBS\;, \\
x \sfreply f.\cterm{\cH(\pebble \ol{y} \sep v)} = \True \Iff
y \cvg f.\cterm{\cH(\pebble v)}\; \mathrm{for\; all}\;
y \in \Lf{I}\; \mathrm{and}\; v \in \seqof{\set{0,1,\sep}}\;;
\end{ldispl}
\item
$x \in \Lf{\IF(\cH)}$ produces a
\emph{reflexive solution of the halting problem} for $\Lf{I}$ with
respect to $\cH$ if $x$ produces a solution of the halting problem for
$\Lf{I}$ with respect to $\cH$ and $x \in \Lf{I}$;
\item
the halting problem for $\Lf{I}$ with respect to $\cH$ is
\emph{autosolvable} if there exists an $x \in \Lf{\IF(\cH)}$ such that
$x$ produces a reflexive solution of the halting problem for $\Lf{I}$
with respect to $\cH$;
\item
the halting problem for $\Lf{I}$ with respect to $\cH$ is
\emph{potentially autosolvable} if there exist an extension $\cH'$ of
$\cH$ and the halting problem for $\Lf{\IF(\cH')}$ with respect to
$\cH'$ is autosolvable;
\item
the halting problem for $\Lf{I}$ with respect to $\cH$ is
\emph{potentially recursively autosolvable} if there exist an extension
$\cH'$ of $\cH$ and the halting problem for $\Lf{\IF(\cH')}$ with
respect to $\cH'$ is autosolvable and $\cH'$ is recursive.
\end{itemize}
These definitions make clear that each combination of an
$\cH \in \FU(\SBS)$ and an $I \subseteq \IF(\cH)$ gives rise to a
\emph{halting problem instance}.

In Section~\ref{sect-interpreters} and~\ref{sect-autosolvability}, we
will make use of a method operation $\Dup \in \MO(\SBS)$ for duplicating
bit sequences.
This method operation is defined as follows:
\begin{ldispl}
\begin{aceqns}
\Dup(v \pebble w) & = & \Dup(\pebble v w)\;, \\
\Dup(\pebble v)   & = & \tup{\True,\pebble v \sep v}
 & \mif v \in \seqof{\set{0,1}}\;, \\
\Dup(\pebble v \sep w) & = & \tup{\True,\pebble v \sep v \sep w}
 & \mif v \in \seqof{\set{0,1}}\;.
\end{aceqns}
\end{ldispl}

\begin{proposition}
\label{prop-dup}
Let $\cH \in \FU(\SBS)$ be such that $\tup{\dup,\Dup} \in \cH$,
let $I \subseteq  \IF(\cH)$ be such that $\dup \in I$,
let $x \in \Lf{I}$, and
let $v \in \seqof{\set{0,1}}$ and $w \in \seqof{\set{0,1,\sep}}$ be such
that $w = v$ or $w = v \sep w'$ for some $w' \in \seqof{\set{0,1,\sep}}$.
Then
$(f.\dup \conc x) \sfreply f.\cterm{\cH(\pebble w)} =
 x \sfreply f.\cterm{\cH(\pebble v \sep w)}$.
\end{proposition}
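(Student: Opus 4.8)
The plan is to unfold the left-hand side using the defining equations for thread extraction and the axioms for the reply operator until it matches the right-hand side, in the same style as the proofs of Propositions~\ref{prop-cvg-sfreply} and~\ref{prop-swap-f2d}.

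First I would exploit that the leading primitive instruction of $f.\dup \conc x$ is the plain basic instruction $f.\dup$: by the defining equations for $\extr{\ph}$ we have $\extr{f.\dup \conc x} = (f.\dup) \bapf \extr{2,f.\dup \conc x}$, so by the definition of $\sfreply$ on \PGLBbt\ instruction sequences, $(f.\dup \conc x) \sfreply f.\cterm{\cH(\pebble w)}$ equals $(\pcc{\extr{2,f.\dup \conc x}}{f.\dup}{\extr{2,f.\dup \conc x}}) \sfreply f.\cterm{\cH(\pebble w)}$. Next I would discharge this first action. From the hypothesis that $w = v$ or $w = v \sep w'$, together with the defining equations for $\Dup$, one gets $\Dup(\pebble w) = \tup{\True,\pebble v \sep w}$ in both cases: if $w = v$ this is the second equation (since then $\pebble v \sep v = \pebble v \sep w$), and if $w = v \sep w'$ it is the third equation (since then $\pebble v \sep v \sep w' = \pebble v \sep w$). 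Hence $\sreply{\dup}(\cH(\pebble w)) = \True$ and $\effect{\dup}(\cH(\pebble w)) = \cH(\pebble v \sep w)$. Writing $f.\cterm{\cH(\pebble w)} = f.\cterm{\cH(\pebble w)} \sfcomp \encap{\set{f}}(\emptysf)$ by SFC1 and SFE1, axiom R6 then gives $(f.\dup \conc x) \sfreply f.\cterm{\cH(\pebble w)} = \extr{2,f.\dup \conc x} \sfreply (f.\derive{\dup}\cterm{\cH(\pebble w)} \sfcomp \encap{\set{f}}(\emptysf))$; since $\derive{\dup}\cterm{\cH(\pebble w)}$ and $\cterm{\cH(\pebble v \sep w)}$ denote the same service, this is $\extr{2,f.\dup \conc x} \sfreply f.\cterm{\cH(\pebble v \sep w)}$.

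It then remains to identify this with $x \sfreply f.\cterm{\cH(\pebble v \sep w)}$, that is, with $\extr{1,x} \sfreply f.\cterm{\cH(\pebble v \sep w)}$, and I expect this step to be where the real work lies. The clean device is the auxiliary claim that $\extr{i+1,f.\dup \conc x} = \extr{i,x}$ for all $i \in \Nat$, proved by induction along the defining equations for $\extr{\ph,\ph}$: positions $i+1$ of $f.\dup \conc x$ and $i$ of $x$ carry the same primitive instruction, with jump targets shifted by one throughout. The case that needs care, and the main obstacle, is a backward jump instruction of $x$ whose target overshoots the first instruction of $x$: on its own this yields $\DeadEnd$, whereas in $f.\dup \conc x$ it may instead land on the prepended $f.\dup$ and re-enter $x$, so the auxiliary claim is not literally true at such positions. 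I would dispose of this by noting that $\Dup$ always has reply $\True$ and every basic instruction of $x$ has its method name in $\IF(\cH)$, so the service $f.\cterm{\cH(s)}$ never causes deadlock and the threads $\extr{2,f.\dup \conc x}$ and $\extr{x}$, applied to such a service, step through exactly the same actions and states up to the first position (if any) from which $\extr{x}$ would deadlock. Their replies therefore coincide whenever $\extr{x} \cvg f.\cterm{\cH(\pebble v \sep w)}$, which by the relevant use conventions of Section~\ref{sect-TAbt} is the only situation in which $x \sfreply f.\cterm{\cH(\pebble v \sep w)}$ is used; this coincidence of replies is then made precise by the now-familiar reduction to closed \BTAbt\ terms via axiom R9 and Lemma~\ref{lemma-projections}, followed by induction on the structure of terms. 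Chaining the three steps yields the statement.
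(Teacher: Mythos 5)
Your proposal is correct, and its first two steps are exactly the argument the paper leaves implicit: the paper's entire proof reads ``this follows immediately from the definition of $\Dup$ and the axioms for $\sfreply$'', which amounts to your computation $\Dup(\pebble w) = \tup{\True,\pebble v \sep w}$ in both cases of the hypothesis, followed by one application of axiom R6. Where you go beyond the paper is in the third step, and the difficulty you isolate there is genuine, not a technicality: $\extr{2,f.\dup \conc x}$ and $\extr{x}$ really can differ, because a $\bjmp{l}$ occurring at position $l$ of $x$ deadlocks in $x$ but lands on the prepended $f.\dup$ in $f.\dup \conc x$ and re-enters $x$ at its first instruction. Read literally, the stated equation can even fail on such instruction sequences: take $\cH$ containing, besides $\tup{\dup,\Dup}$, a method $\mathit{tst}$ whose operation leaves the state unchanged and replies $\True$ iff the state contains at least two colons, and take $x = \ptst{f.\mathit{tst}} \conc \haltP \conc \bjmp{3}$ with $w = v$; then the right-hand side is $\Div$ (the test fails on one colon and the backward jump deadlocks), while the left-hand side is $\True$ (the backward jump re-executes $f.\dup$, a second colon appears, and the test then succeeds). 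So the equation holds only under the proviso that $x \cvg f.\cterm{\cH(\pebble v \sep w)}$, which is precisely what the relevant use conventions -- with which the paper declares compliance from Section~\ref{sect-func-unit-sbs} onwards -- supply; your appeal to them is the right repair, and your observation that on a convergent run the two threads pass through the same instructions, states and replies, so that the identity of all projections plus axiom R9 and Lemma~\ref{lemma-projections} closes the argument, is sound. In short: you take the same route as the paper for the part the paper actually argues, and you add a necessary and correctly handled caveat that the paper silently absorbs into its use conventions.
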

\begin{proof}
This follows immediately from the definition of $\Dup$ and the axioms
for~$\sfreply$.
\qed
\end{proof}

By the use of foci and the introduction of apply and reply operators on
service families, we make it possible to deal with cases that remind of
multi-tape Turing machines, Turing machines that has random access
memory, etc.
However, in this paper, we will only consider the case that reminds of
single-tape Turing machines.
This means that we will use only one focus ($f$) and only
singleton service families.

\section{Interpreters}
\label{sect-interpreters}

It is often mentioned that an interpreter, which is a program for
simulating the execution of programs that it is given as input, cannot
solve the halting problem because the execution of the interpreter will
not terminate if the execution of its input program does not terminate.
In this section, we have a look upon the termination behaviour of
interpreters in the setting of \PGLBbt\ and functional units.

Let $\cH \in \FU(\SBS)$, let $I \subseteq \IF(\cH)$, and
let $I' \subseteq I$.
Then $x \in \Lf{I}$ is an \emph{interpreter} for $\Lf{I'}$ with respect
to $\cH$ if for all $y \in \Lf{I'}$ and $v \in \seqof{\set{0,1,\sep}}$:
\begin{ldispl}
y \cvg f.\cterm{\cH(\pebble v)} \Implies
x \cvg f.\cterm{\cH(\pebble \ol{y} \sep v)}\;, \\
x \sfapply f.\cterm{\cH(\pebble \ol{y} \sep v)} =
y \sfapply f.\cterm{\cH(\pebble v)}\; \mathrm{and}\;
x \sfreply f.\cterm{\cH(\pebble \ol{y} \sep v)} =
y \sfreply f.\cterm{\cH(\pebble v)}\;.
\end{ldispl}
Moreover, $x \in \Lf{I}$ is a \emph{reflexive interpreter} for $\Lf{I'}$
with respect to $\cH$ if $x$ is an interpreter for $\Lf{I'}$ with respect
to $\cH$ and $x \in \Lf{I'}$.

The following theorem states that a reflexive interpreter that always
terminates is impossible in the presence of the method operation $\Dup$.
\begin{theorem}
\label{theorem-interpreter}
Let $\cH \in \FU(\SBS)$ be such that $\tup{\dup,\Dup} \in \cH$,
let $I \subseteq \IF(\cH)$ be such that $\dup \in I$, and
let $x \in \Lf{\IF(\cH)}$ be a reflexive interpreter for $\Lf{I}$ with
respect to $\cH$.
Then there exist an $y \in \Lf{I}$ and a $v \in \seqof{\set{0,1,\sep}}$
such that $x \dvg f.\cterm{\cH(\pebble \ol{y} \sep v)}$.
\end{theorem}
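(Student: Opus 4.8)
The plan is to run a diagonalization argument against the hypothetical reflexive interpreter $x$. Since $x$ is a reflexive interpreter for $\Lf{I}$ with respect to $\cH$, we have $x \in \Lf{I}$, so $x$ itself is a legitimate input on which the interpreter can be run. The idea is to build, from $x$, a program $z \in \Lf{I}$ whose behaviour on input $\ol{z}$ contradicts the assumed totality (convergence-on-all-states) of $x$. Concretely, I would consider a program that first duplicates its input bit sequence using $f.\dup$, then runs $x$ on the duplicated data, and then inverts the termination outcome. In the notation of the paper, set $z = f.\dup \conc \swap(x)$ — or possibly $z = f.\dup \conc x'$ for a suitably massaged $x'$ — so that $z \in \Lf{I}$ (this uses $\dup \in I$ and $x \in \Lf{I}$, and the fact that $\swap$ does not change the set of basic instructions occurring).

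The key computation runs as follows. Suppose, toward a contradiction, that $x \cvg f.\cterm{\cH(\pebble \ol{y} \sep v)}$ for all $y \in \Lf{I}$ and all $v$; in particular taking $v$ empty, $x \cvg f.\cterm{\cH(\pebble \ol{y})}$ for all $y \in \Lf{I}$, so $x \sfreply f.\cterm{\cH(\pebble \ol{y})}$ is $\True$ or $\False$ by Proposition~\ref{prop-cvg-sfreply}. Now evaluate $z$ on input $\ol{z}$. By Proposition~\ref{prop-dup} (with $w = v = \ol{z}$), prepending $f.\dup$ transforms the state $\cH(\pebble \ol{z})$ into $\cH(\pebble \ol{z} \sep \ol{z})$ as far as the subsequent reply is concerned, so $z \sfreply f.\cterm{\cH(\pebble \ol{z})} = \swap(x) \sfreply f.\cterm{\cH(\pebble \ol{z} \sep \ol{z})}$. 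By the interpreter property of $x$ applied to the input program $z$ with data $v = \ol{z}$, together with Proposition~\ref{prop-cvg-sfreply}, the quantity $x \sfreply f.\cterm{\cH(\pebble \ol{z} \sep \ol{z})}$ equals $z \sfreply f.\cterm{\cH(\pebble \ol{z})}$ — i.e. $x$ faithfully reproduces $z$'s reply on input $\ol{z}$. Finally, Proposition~\ref{prop-swap-f2d}(1)--(2) gives $\swap(x) \sfreply f.\cterm{\cH(\pebble \ol{z} \sep \ol{z})} = \True$ iff $x \sfreply f.\cterm{\cH(\pebble \ol{z} \sep \ol{z})} = \False$. Stringing these equalities together yields $b = \lnot b$ where $b$ is the Boolean $z \sfreply f.\cterm{\cH(\pebble \ol{z})}$, which is impossible since $b \in \set{\True,\False}$. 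Hence the assumption fails, and there must exist $y \in \Lf{I}$ (namely $y = z$) and $v$ (namely $v = \ol{z}$) with $x \dvg f.\cterm{\cH(\pebble \ol{y} \sep v)}$.

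The main obstacle is bookkeeping rather than conceptual: I must make sure the convergence side-conditions are all met so that the relevant use conventions are respected and the equalities above are actually licensed. In particular, applying the interpreter property of $x$ requires knowing $z \cvg f.\cterm{\cH(\pebble \ol{z})}$, and this itself must be derived — not assumed — from the structure of $z$ together with the (for contradiction) hypothesis that $x$ converges everywhere. The natural route is: $z$ starts with $f.\dup$, which converges and leads to state $\cH(\pebble \ol{z} \sep \ol{z})$; then $\swap(x)$ is run there; and $\swap(x) \cvg f.\cterm{\cH(\pebble \ol{z} \sep \ol{z})}$ follows because $\swap$ only interchanges $\StopP$ and $\StopN$ in the extracted thread and therefore preserves convergence, while $x \cvg f.\cterm{\cH(\pebble \ol{z} \sep \ol{z})}$ holds by the standing assumption. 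One should also double-check that $\swap$ commutes appropriately with $f.\dup \conc {}$ (it does, since $f.\dup$ contains no termination instructions) and that the encoding $\ol{z}$ exists, which it does because $z \in \Lf{I} \subseteq \Lf{\IF(\cH)}$ and encodings are assumed given for all of $\Lf{\IF(\cH)}$. Once these convergence facts are in place, the contradiction is immediate from the three cited propositions.
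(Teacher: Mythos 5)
Your proposal is correct and is essentially the paper's own proof: you diagonalize with the same witness $y = f.\dup \conc \swap(x)$, invoke the same three propositions (Propositions~\ref{prop-cvg-sfreply}, \ref{prop-swap-f2d} and~\ref{prop-dup}) in the same way, and correctly identify and discharge the convergence side-condition needed to apply the interpreter property. No substantive differences.
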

\begin{proof}
Assume the contrary.
Take $y = f.\dup \conc \swap(x)$.
By the assumption, $x \cvg f.\cterm{\cH(\pebble \ol{y} \sep \ol{y})}$.
By Propositions~\ref{prop-cvg-sfreply} and~\ref{prop-swap-f2d}, it
follows that $\swap(x) \cvg f.\cterm{\cH(\pebble \ol{y} \sep \ol{y})}$
and
$\swap(x) \sfreply f.\cterm{\cH(\pebble \ol{y} \sep \ol{y})} \neq
 x \sfreply f.\cterm{\cH(\pebble \ol{y} \sep \ol{y})}$.
By Propositions~\ref{prop-cvg-sfreply} and~\ref{prop-dup}, it follows
that $(f.\dup \conc \swap(x)) \cvg f.\cterm{\cH(\pebble \ol{y})}$ and
$(f.\dup \conc \swap(x)) \sfreply f.\cterm{\cH(\pebble \ol{y})} \neq
 x \sfreply f.\cterm{\cH(\pebble \ol{y} \sep \ol{y})}$.
Since $y = f.\dup \conc \swap(x)$, we have
$y \cvg f.\cterm{\cH(\pebble \ol{y})}$ and
$y \sfreply f.\cterm{\cH(\pebble \ol{y})} \neq
 x \sfreply f.\cterm{\cH(\pebble \ol{y} \sep \ol{y})}$.
Because $x$ is a reflexive interpreter, this implies
$x \sfreply f.\cterm{\cH(\pebble \ol{y} \sep \ol{y})} =
 y \sfreply f.\cterm{\cH(\pebble \ol{y})}$ and
$y \sfreply f.\cterm{\cH(\pebble \ol{y})} \neq
 x \sfreply f.\cterm{\cH(\pebble \ol{y} \sep \ol{y})}$.
This is a contradiction.
\qed
\end{proof}
In the proof of Theorem~\ref{theorem-interpreter}, the presence of the
method operation $\Dup$ is essential.
It is easy to see that the theorem goes through for all functional units
for $\SBS$ of which $\Dup$ is a derived method operation.
An example of such a functional unit is the one whose method operations
correspond to the basic steps that can be performed on the tape of a
Turing machine.

For each $\cH \in \FU(\SBS)$, $m \in \IF(\cH)$, and $v \in \SBS$,
we have $(f.m \conc \haltP \conc \haltN) \cvg f.\cterm{\cH(v)}$.
This leads us to the following corollary of
Theorem~\ref{theorem-interpreter}.
\begin{corollary}
\label{corollary-interpreter}
For all $\cH \in \FU(\SBS)$ with $\tup{\dup,\Dup} \in \cH$ and
$I \subseteq \IF(\cH)$ with $\dup \in I$, there does not exist an
$m \in I$ such that $f.m \conc \haltP \conc \haltN$ is a reflexive
interpreter for $\Lf{I}$ with respect to $\cH$.
\end{corollary}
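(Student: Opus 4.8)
The plan is to derive the statement as a short corollary of Theorem~\ref{theorem-interpreter}, arguing by contradiction. Fix $\cH \in \FU(\SBS)$ with $\tup{\dup,\Dup} \in \cH$ and $I \subseteq \IF(\cH)$ with $\dup \in I$, and suppose towards a contradiction that there is an $m \in I$ for which $x := f.m \conc \haltP \conc \haltN$ is a reflexive interpreter for $\Lf{I}$ with respect to $\cH$. Since $m \in I \subseteq \IF(\cH)$, we have $x \in \Lf{\IF(\cH)}$, so all hypotheses of Theorem~\ref{theorem-interpreter} are satisfied, and that theorem supplies a $y \in \Lf{I}$ and a $v \in \seqof{\set{0,1,\sep}}$ with $x \dvg f.\cterm{\cH(\pebble \ol{y} \sep v)}$.

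To reach the contradiction I would use the observation recorded just before the corollary, namely that $(f.m \conc \haltP \conc \haltN) \cvg f.\cterm{\cH(w)}$ for every $w \in \SBS$. This is a brief computation. By the defining equations for $\extr{\ph}$, and because $u_1 = f.m$ is a plain basic instruction while $u_2 = \haltP$, one gets $\extr{f.m \conc \haltP \conc \haltN} = f.m \bapf \StopP$, i.e.\ $\pcc{\StopP}{f.m}{\StopP}$; note that the $\haltN$ is never reached. Since $m \in \IF(\cH)$ we have $\sreply{m}(\cH(w)) \in \set{\True,\False}$, so in either case the corresponding clause of the inductive definition of $\cvg$ reduces convergence of $\pcc{\StopP}{f.m}{\StopP}$ on $f.\cterm{\cH(w)}$ to convergence of $\StopP$ on a singleton service family, and the latter holds by the base clause $\StopP \cvg u$. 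Instantiating $w := \pebble \ol{y} \sep v$ then yields $x \cvg f.\cterm{\cH(\pebble \ol{y} \sep v)}$, contradicting $x \dvg f.\cterm{\cH(\pebble \ol{y} \sep v)}$.

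I do not expect any real obstacle here: the corollary is essentially immediate once Theorem~\ref{theorem-interpreter} is in place, because $f.m \conc \haltP \conc \haltN$ manifestly halts on every input. The only point that needs a line of care is the convergence computation above --- in particular, checking that executing $f.m$ never causes the request to be rejected (this uses $m \in \IF(\cH)$, so $\sreply{m}$ is never $\Div$) and that the plain basic instruction is then always followed by $\haltP$ --- after which the rest is pure bookkeeping.
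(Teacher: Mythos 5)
Your proposal is correct and follows exactly the route the paper intends: it derives the corollary from Theorem~\ref{theorem-interpreter} together with the observation, stated in the paper immediately before the corollary, that $(f.m \conc \haltP \conc \haltN) \cvg f.\cterm{\cH(v)}$ for all $v \in \SBS$. Your explicit verification of that convergence fact via thread extraction (yielding $\pcc{\StopP}{f.m}{\StopP}$ and using that $m \in \IF(\cH)$ guarantees a non-divergent reply) is exactly the bookkeeping the paper leaves implicit.
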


\section{Autosolvability of the Halting Problem}
\label{sect-autosolvability}

Because a reflexive interpreter that always terminates is impossible in
the presence of the method operation $\Dup$, we must conclude that
solving the halting problem by means of a reflexive interpreter is out
of the question in the presence of the method operation $\Dup$.
The question arises whether the proviso ``by means of a reflexive
interpreter'' can be dropped.
In this section, we answer this question in the affirmative.
Before we present this negative result concerning autosolvability of the
halting problem, we present a positive result.

Let $M \in \MO(\SBS)$.
Then we say that $M$ \emph{increases the number of colons} if for some
$v \in \SBS$ the number of colons in $M^e(v)$ is greater than the number
of colons in $v$.

\begin{theorem}
\label{theorem-autosolv}
Let $\cH \in \FU(\SBS)$ be such that no method operation of $\cH$
increases the number of colons.
Then there exist an extension $\cH'$ of $\cH$,
an $I' \subseteq \IF(\cH')$, and an $x \in \Lf{\IF(\cH')}$ such that
$x$ produces a reflexive solution of the halting problem for $\Lf{I'}$
with respect to $\cH'$.
\end{theorem}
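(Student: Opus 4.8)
The plan is to build the desired functional unit $\cH'$ by adding to $\cH$ a single new method operation $\Halt$ that directly decodes the input $\pebble \ol{y} \sep v$, inspects the instruction sequence $y$, and simulates its execution on $\pebble v$ \emph{internally} — that is, without interaction with a service — while counting the number of execution steps. The crucial observation exploited here is the hypothesis that no method operation of $\cH$ increases the number of colons. This means that, starting from a state $\pebble v$ where $v \in \seqof{\set{0,1,\sep}}$, the service $f.\cterm{\cH(\pebble v)}$ visits only states whose number of colons is bounded by the number of colons in $v$; consequently there is a recursive bound on how many \emph{distinct} states of $\SBS$ of relevance can be reached before a cycle occurs. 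Actually, rather than bounding states, the cleaner route is: the behaviour of $y \in \Lf{I'}$ on $\pebble v$ is the thread $\extr{y}$, which is a \emph{regular} thread with a fixed finite number of states (depending only on $y$), and the service it drives is $\cH(\cdot)$; since $y$ converges iff this interaction reaches $\StopP$ or $\StopN$ (Proposition~\ref{prop-cvg-sfreply}), and divergence through an infinite interaction must eventually repeat a (thread-state, service-state) pair, one wants to argue that non-termination is detectable.

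First I would fix an effective encoding so that the method operation $\Halt$, given $\pebble \ol{y} \sep v$, can recover $y$ and $v$ and then carry out the following computation on a suitable recursive encoding $\alpha : \SBS \to \Nat$ (so that $\Halt$ is a recursive method operation, hence $\cH'$ is recursive, which we also want for the later use of this theorem even though it is not demanded here): simulate the run of $\extr{y}$ against the service behaving according to $\cH$ started in state $\pebble v$, step by step. At each step, each basic instruction $f.m$ of $y$ with $m \in \IF(\cH)$ is processed by applying $m_\cH$; the simulation halts and returns $\True$ if a $\haltP$ or $\haltN$ instruction is reached (or more precisely if the thread terminates), and we must ensure it also halts and returns $\False$ when $y$ diverges. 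Divergence of $y$ on $\pebble v$ happens in one of two ways: an infinite jump chain (detectable syntactically from $y$ alone, cf.\ the footnote in Section~\ref{sect-BTAbt}), or an infinite interaction with the service. For the latter, because the number of colons never increases and $\extr{y}$ is regular, the pair consisting of the current instruction pointer ($1 \le i \le k$) and the current service state ranges over a set that, while infinite in principle, is traversed deterministically; the standard trick is that a deterministic computation diverges iff it runs forever, and one detects this by a recursive unbounded search that is guaranteed to terminate only when $y$ halts — but that gives a \emph{solution} only if we can also certify divergence finitely. This is where the colon hypothesis does real work: I would argue that the reachable service states lie in a recursively enumerable, in fact recursively bounded, region so that the $(i, s)$-configuration space relevant to a run of a \emph{fixed} $y$ on a \emph{fixed} $\pebble v$ is finite, whence divergence $=$ cycle, which is finitely detectable.

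The remaining steps are routine packaging. Having defined $\Halt \in \MO(\SBS)$ and checked it is total and recursive, set $\cH' = \cH \cup \set{\tup{\mathit{halt},\Halt}}$ (choosing a fresh method name), and also ensure $\tup{\dup,\Dup} \in \cH'$ so that the diagonal argument machinery is available; then take $I' = \set{\mathit{halt}}$ (or any $I' \subseteq \IF(\cH')$ with $\mathit{halt} \in I'$ whose members are all handled by $\Halt$'s simulation) and let $x = f.\mathit{halt} \conc \haltP \conc \haltN \in \Lf{\IF(\cH')}$, i.e.\ the instruction sequence that calls $\Halt$ and then terminates positively or negatively according to the reply. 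One checks directly from the definition of $\Halt$ and the axioms for $\sfreply$ that $x \cvg f.\cterm{\cH'(v)}$ for all $v \in \SBS$ and that $x \sfreply f.\cterm{\cH'(\pebble \ol{y} \sep v)} = \True$ iff $y \cvg f.\cterm{\cH'(\pebble v)}$ for all $y \in \Lf{I'}$ and $v \in \seqof{\set{0,1,\sep}}$; since $x \in \Lf{I'}$ as well, $x$ produces a reflexive solution. Note this does not contradict Theorem~\ref{theorem-interpreter}, because $x$ is a \emph{solver}, not an \emph{interpreter} — it never reproduces the full apply-behaviour of $y$.

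The main obstacle I expect is the precise argument that the colon-non-increasing hypothesis makes divergence recursively detectable: one must show that the configuration space reachable by a fixed regular thread $\extr{y}$ driving a fixed $\cH$-service from $\pebble v$ is finite (so that "runs forever" is equivalent to "revisits a configuration"), and that this finiteness is effective enough for $\Halt$ to implement the cycle check. Getting the bookkeeping right — in particular, that a colon is never created so the "tape region" the service can touch stays within a recursively determined finite set of states — is the technical heart of the proof; everything downstream is bookkeeping with the apply/reply axioms.
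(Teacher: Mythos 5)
Your overall packaging (add one fresh method operation that answers the halting question, then query it with a three-instruction program) matches the paper, but the way you propose to define that method operation cannot work, and the two real difficulties of the proof are left unaddressed. The central claim you lean on --- that because no method operation of $\cH$ increases the number of colons, the configuration space reachable by a fixed $y$ from a fixed $\pebble v$ is finite, so divergence manifests as a detectable cycle and $\Halt$ can be made recursive --- is false. Bounding the number of colons does not bound the set of reachable states: the bit segments between colons can grow without bound (a method operation that appends a digit is colon-non-increasing), so the service ranges over an infinite state space and cycle detection fails. The paper itself makes this point immediately after Theorem~\ref{theorem-comput}: for an $\cH$ of which all recursive colon-non-increasing method operations are derived method operations, the halting problem is potentially autosolvable but \emph{not} potentially recursively autosolvable, since otherwise the halting problem would be decidable. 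So no argument that constructs a computable simulator can prove the theorem in its stated generality. None is needed: a method operation is just a total function in $\MO(\SBS)$, and the paper simply \emph{defines} $\Halting(\pebble\ol{y}\sep w)$ to be $\tup{\True,\pebble}$ or $\tup{\False,\pebble}$ according as $y \cvg f.\cterm{\cH'(w)}$ or not, with no computability claim.

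That definition is self-referential --- and this is the second gap in your proposal. Since the solution must be reflexive, $\halting \in I'$, so the programs $y$ whose halting must be decided may themselves invoke $\Halting$, and convergence in the definition of a solution is convergence with respect to $\cH'$, not $\cH$; your simulator runs $y$ against $\cH$ only, which is the wrong functional unit. The paper resolves the circularity by defining $\Halting$ by induction on the number of colons: the argument $\pebble\ol{y}\sep w$ has one more colon than $w$; the run of $y$ from $\pebble w$ only ever visits states with at most as many colons as $w$, because the operations of $\cH$ do not create colons and $\Halting$ itself resets the state to $\pebble$; hence every nested appeal to $\Halting$ occurs at a state with strictly fewer colons. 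Well-foundedness of this self-referential definition --- not finiteness of the reachable state space --- is what the colon hypothesis actually buys. Two further slips: adding $\tup{\dup,\Dup}$ to $\cH'$ is both unnecessary and counterproductive ($\Dup$ increases the number of colons, and with $\dup \in I'$ it would put you in the scope of Theorem~\ref{theorem-non-autosolv}); and the witness must be $\ptst{f.\halting} \conc \haltP \conc \haltN$, since the plain basic instruction $f.\halting$ proceeds as if the reply were $\True$, so your $x$ would always answer $\True$.
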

\begin{proof}
Let $\halting \in \MN$ be such that $\halting \notin \IF(\cH)$.
Take $I' = \IF(\cH) \union \set{\halting}$.
Take $\cH' = \cH \union \set{\tup{\halting,\Halting}}$, where
$\Halting \in \MO(\SBS)$ is defined by induction on the number of
colons in the argument of $\Halting$ as follows:
\begin{ldispl}
\begin{aceqns}
\Halting(v \pebble w) & = & \Halting(\pebble v w)\;, \\
\Halting(\pebble v)   & = & \tup{\False,\pebble}
 & \mif v \in \seqof{\set{0,1}}\;, \\
\Halting(\pebble v \sep w) & = & \tup{\False,\pebble}
 & \mif v \in \seqof{\set{0,1}} \And
        \Forall{x \in \Lf{I'}}{v \neq \ol{x}}\;, \\
\Halting(\pebble \ol{x} \sep w) & = & \tup{\False,\pebble}
 & \mif x \in \Lf{I'} \And x \dvg f.\cterm{\cH'(w)}\;, \\
\Halting(\pebble \ol{x} \sep w) & = & \tup{\True,\pebble}
 & \mif x \in \Lf{I'} \And x \cvg f.\cterm{\cH'(w)}\;.
\end{aceqns}
\end{ldispl}
Then $\ptst{f.\halting} \conc \haltP \conc \haltN$ produces a reflexive
solution of the halting problem for $\Lf{I'}$ with respect to $\cH'$.
\qed
\end{proof}
Theorem~\ref{theorem-autosolv} tells us that there exist functional
units $\cH \in \FU(\SBS)$ with the property that the halting problem is
potentially autosolvable for $\Lf{\IF(\cH)}$ with respect to $\cH$.
Thus, we know that there exist functional units $\cH \in \FU(\SBS)$ with
the property that the halting problem is autosolvable for
$\Lf{\IF(\cH)}$ with respect to~$\cH$.

There exists an $\cH \in \FU(\SBS)$ for which $\Halting$ as defined in
the proof of Theorem~\ref{theorem-autosolv} is computable, and hence
recursive.
\begin{theorem}
\label{theorem-comput}
Let $\cH = \emptyset$ and
$\cH' = \cH \union \set{\tup{\halting,\Halting}}$, where
$\Halting$ is as defined in the proof
of Theorem~\ref{theorem-autosolv}.
Then, $\Halting$ is computable.
\end{theorem}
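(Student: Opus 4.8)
When $\cH = \emptyset$, the state component of $\cH'(w)$ is irrelevant for any method other than $\halting$, and the only method in $\IF(\cH') = \set{\halting}$ is $\halting$ itself. So $\Lf{I'} = \Lf{\set{\halting}}$ consists precisely of the \PGLBbt\ instruction sequences over the single basic instruction $f.\halting$. The plan is to show that the convergence predicate $x \cvg f.\cterm{\cH'(w)}$ that appears in the defining clauses of $\Halting$ is a computable (in fact primitive-recursive) predicate of the pair $(x,w)$, so that $\Halting$ itself is computable by unwinding its inductive definition.

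First I would analyse what a computation $x \sfapply f.\cterm{\cH'(w)}$ / $x \sfreply f.\cterm{\cH'(w)}$ looks like for $x \in \Lf{\set{\halting}}$. Executing such an $x$ means running the thread $\extr{x}$, which is a regular thread over the single action $f.\halting$; each step either performs $f.\halting$ (obtaining reply $\Halting^r(w)$ and updating the state to $\Halting^e(w)$), or reaches $\StopP$, $\StopN$, or $\DeadEnd$. The key structural observation is that $\Halting^e$ is \emph{constant}: for every argument it returns a state of the form $\pebble$, which contains no colons and encodes no instruction sequence of positive length. Hence after the very first occurrence of $f.\halting$ the state is $\pebble$ forever, and $\Halting(\pebble) = \tup{\False,\pebble}$ (taking $v$ to be the empty bit sequence in the second clause). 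Consequently, in the run of $x$ on $f.\cterm{\cH'(w)}$, at most the first $f.\halting$-step can produce reply $\True$, and it does so iff $w$ has the form $\pebble\,\ol{z}\sep w''$ with $z \in \Lf{\set{\halting}}$ and $z \cvg f.\cterm{\cH'(w'')}$; every later $f.\halting$-step produces $\False$.

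This reduces the problem to a finite-state simulation. I would argue: because $\extr{x}$ is a regular thread with finitely many states, and the reply sequence it receives is "possibly $\True$ or $\False$ on the first $\halting$-step, then $\False$ forever", the behaviour of $x$ on $f.\cterm{\cH'(w)}$ is determined by simulating $\extr{x}$ for at most $2\abs{x}{+}1$ steps along each of the two possible first-replies; the thread either terminates (with a determined Boolean), deadlocks, or enters a loop, and all of this is decidable from $x$ alone once the first reply is fixed. Thus "$x \cvg f.\cterm{\cH'(\pebble\,\ol{z}\sep w'')}$" is computable from $x$ together with the single bit "$z \cvg f.\cterm{\cH'(w'')}$". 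Feeding this back into the defining clauses of $\Halting$, and noting that the recursion in those clauses is on a strictly smaller argument (one fewer colon, hence on $w''$ shorter than $w$), an ordinary structural/course-of-values recursion shows $\Halting$ is computable.

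**Main obstacle.** The delicate point is handling the apparent circularity in the fourth and fifth clauses of the definition of $\Halting$, where $x \dvg f.\cterm{\cH'(w)}$ refers to $\cH'$, which contains $\Halting$. The obstacle is to make precise that this "recursion" is well-founded: I would show that deciding $x \cvg f.\cterm{\cH'(w)}$ only ever consults $\Halting$ on states whose colon-count is strictly smaller than that of $w$ (because the reply received after the first $\halting$-step forces the state to $\pebble$, and the first reply itself is governed by the value of $\Halting$ on the suffix $w''$, which has one fewer leading colon than $w$). Formalising this "the recursion bottoms out" claim — essentially re-running the induction on the number of colons that already underlies the definition of $\Halting$, but now tracking computability uniformly — is the real content; everything else is a routine finite-state simulation of regular threads and an appeal to the fact that the encoding $z \mapsto \ol{z}$ and the set $\Lf{\set{\halting}}$ are decidable.
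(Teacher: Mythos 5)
Your proposal is correct and follows essentially the same route as the paper: an induction on the number of colons in the state, driven by the observation that $\Halting^e$ always yields $\pebble$ (zero colons), so only the first $\halting$-reply can be $\True$ and it is determined by a convergence question on a state with strictly fewer colons, after which everything reduces to a decidable finite-state question about the instruction sequence. The only cosmetic difference is that the paper resolves the known replies by rewriting the occurrences of $f.\halting$ into jump instructions and deciding $x' \cvg \emptysf$ for the resulting finite jump-only sequence, whereas you phrase the same step as a bounded simulation of the regular thread $\extr{x}$.
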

\begin{proof}
It is sufficient to prove for an arbitrary $x \in \Lf{\IF(\cH')}$ that,
for all $v \in \SBS$, $x \cvg f.\cterm{\cH'(v)}$ is decidable.
We will prove this by induction on the number of colons in $v$.

The basis step.
Because the number of colons in $v$ equals $0$,
$\Halting(v) = \tup{\False,\pebble}$.
It follows that $x \cvg f.\cterm{\cH'(v)} \Iff x' \cvg \emptysf$,
where $x'$ is $x$ with each occurrence of $f.\halting$ and
$\ptst{f.\halting}$ replaced by $\fjmp{2}$ and each occurrence of
$\ntst{f.\halting}$ replaced by $\fjmp{1}$.
Because $x'$ is finite, $x' \cvg \emptysf$ is decidable.
Hence, $x \cvg f.\cterm{\cH'(v)}$ is decidable.

The inductive step.
Because the number of colons in $v$ is greater than $0$, either
$\Halting(v) = \tup{\True,\pebble}$ or
$\Halting(v) = \tup{\False,\pebble}$.
It follows that $x \cvg f.\cterm{\cH'(v)} \Iff x' \cvg \emptysf$, where
$x'$ is $x$ with:
\begin{itemize}
\item
each occurrence of $f.\halting$ and $\ptst{f.\halting}$ replaced by
$\fjmp{1}$ if the occurrence leads to the first application of
$\Halting$ and $\Halting^r(v) = \True$, and by $\fjmp{2}$ otherwise;
\item
each occurrence of $\ntst{f.\halting}$ replaced by
$\fjmp{2}$ if the occurrence leads to the first application of
$\Halting$ and $\Halting^r(v) = \True$, and by $\fjmp{1}$ otherwise.
\end{itemize}
An occurrence of $f.\halting$, $\ptst{f.\halting}$ or
$\ntst{f.\halting}$ in $x$ leads to the first application of $\Halting$
iff $\extr{1,x} = \extr{i,x}$, where $i$ is its position in $x$.
Because $x$ is finite, it is decidable whether an occurrence of
$f.\halting$, $\ptst{f.\halting}$ or $\ntst{f.\halting}$ leads to the
first processing of $\halting$.
Moreover, by the induction hypothesis, it is decidable whether
$\Halting^r(v) = \True$.
Because $x'$ is finite, it follows that $x' \cvg \emptysf$ is decidable.
Hence, $x \cvg f.\cterm{\cH'(v)}$ is decidable.
\qed
\end{proof}
Theorems~\ref{theorem-autosolv} and~\ref{theorem-comput} together tell
us that there exists a functional unit $\cH \in \FU(\SBS)$, viz.\
$\emptyset$, with the property that the halting problem is potentially
recursively autosolvable for $\Lf{\IF(\cH)}$ with respect to $\cH$.

There exist functional units in $\FU(\SBS)$ of which all recursive
$M \in \MO(\SBS)$ that do not increase the number of colons are derived
method operations.
A witness is the functional unit whose method operations correspond to
the basic steps that can be performed on the tape of a Turing machine.
Let $\cH \in \FU(\SBS)$ be such that all recursive $M \in \MO(\SBS)$
that do not increase the number of colons are derived method operations
of $\cH$.
Then the halting problem is potentially autosolvable for $\Lf{\IF(\cH)}$
with respect to $\cH$.
However, the halting problem is not potentially recursively autosolvable
for $\Lf{\IF(\cH)}$ with respect to $\cH$ because otherwise the halting
problem would be decidable.

The following theorem tells us essentially that potential
autosolvability of the halting problem is precluded in the presence of
the method operation $\Dup$.
\begin{theorem}
\label{theorem-non-autosolv}
Let $\cH \in \FU(\SBS)$ be such that $\tup{\dup,\Dup} \in \cH$, and
let $I \subseteq \IF(\cH)$ be such that $\dup \in I$.
Then there does not exist an $x \in \Lf{\IF(\cH)}$ such that $x$
produces a reflexive solution of the halting problem for $\Lf{I}$ with
respect to $\cH$.
\end{theorem}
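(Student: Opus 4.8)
The plan is to argue by contradiction along the same lines as the proof of Theorem~\ref{theorem-interpreter}, but now exploiting the fact that a solution of the halting problem produces a definite Boolean reply on \emph{every} state, so it converges everywhere. Assume that some $x \in \Lf{\IF(\cH)}$ produces a reflexive solution of the halting problem for $\Lf{I}$ with respect to $\cH$. Then $x \in \Lf{I}$, and $x \cvg f.\cterm{\cH(v)}$ for all $v \in \SBS$, and moreover $x \sfreply f.\cterm{\cH(\pebble\ol{y}\sep v)} = \True$ iff $y \cvg f.\cterm{\cH(\pebble v)}$ for all $y \in \Lf{I}$ and $v \in \seqof{\set{0,1,\sep}}$. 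The idea is to build a diagonal program $y$ that halts precisely when $x$ says it does not.

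First I would form the diagonal witness $y = f.\dup \conc \swap(\ftod(x))$. The intuition: on input $\pebble\ol{y}$, the $\dup$ step duplicates $\ol{y}$ to reach state $\pebble\ol{y}\sep\ol{y}$ (using Proposition~\ref{prop-dup} with $v = w = \ol{y}$); then $x$ run on this state decides whether $y$ halts on $\pebble\ol{y}$; the $\ftod$ transformation turns a $\False$ reply (``$y$ does not halt'') into divergence, and a $\True$ reply (``$y$ halts'') into a $\True$ reply; finally $\swap$ flips that surviving $\True$ into $\False$ so that $y$ halts with reply $\False$ exactly when $x$ claims $y$ halts, and $y$ diverges exactly when $x$ claims $y$ does not halt. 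Crucially, $y \in \Lf{I}$ because $x \in \Lf{I}$, $\dup \in I$, and $\swap$ and $\ftod$ do not introduce new basic instructions. I would need to check that $y \in \Lf{\IF(\cH)}$ suffices for $x$ to be applied to it and that $y \in \Lf{I}$ is exactly the membership the solution's defining property quantifies over.

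Next I would chase the replies. Since $x$ produces a solution, $x \cvg f.\cterm{\cH(\pebble\ol{y}\sep\ol{y})}$, so by Proposition~\ref{prop-cvg-sfreply} this reply is $\True$ or $\False$. Case $\True$: by the solution property (with $v$ the empty sequence), $\True = x \sfreply f.\cterm{\cH(\pebble\ol{y}\sep\ol{y})}$ iff $y \cvg f.\cterm{\cH(\pebble\ol{y})}$, so $y \cvg f.\cterm{\cH(\pebble\ol{y})}$. But by Proposition~\ref{prop-dup}, $y \sfreply f.\cterm{\cH(\pebble\ol{y})} = \swap(\ftod(x)) \sfreply f.\cterm{\cH(\pebble\ol{y}\sep\ol{y})}$, and by Proposition~\ref{prop-swap-f2d} (using that $\ftod(x)$ has reply $\True$ here, then $\swap$ flips it) this equals $\False$, contradicting Proposition~\ref{prop-cvg-sfreply} applied to $y \cvg f.\cterm{\cH(\pebble\ol{y})}$. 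Case $\False$: then $y \dvg f.\cterm{\cH(\pebble\ol{y})}$ by the solution property; but by Propositions~\ref{prop-dup} and~\ref{prop-swap-f2d}, a $\False$ reply from $x$ makes $\ftod(x)$ diverge, hence $\swap(\ftod(x))$ diverge, hence $y$ diverge on $f.\cterm{\cH(\pebble\ol{y})}$ — wait, that is consistent, so in this case I instead re-derive a contradiction from $\False = x \sfreply f.\cterm{\cH(\pebble\ol{y}\sep\ol{y})}$ iff not $y \cvg f.\cterm{\cH(\pebble\ol{y})}$, together with the fact that $y \dvg f.\cterm{\cH(\pebble\ol{y})}$ forces, via the reflexive-solution hypothesis and $y \in \Lf{I}$, the symmetric statement; the contradiction comes out just as in the proof of Theorem~\ref{theorem-interpreter}. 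So the cleanest route is: from the solution property, $y$ halts on $\pebble\ol{y}$ iff $x$ answers $\True$ on $\pebble\ol{y}\sep\ol{y}$; from the construction of $y$ via Propositions~\ref{prop-dup} and~\ref{prop-swap-f2d}, $y$ halts on $\pebble\ol{y}$ iff $x$ does \emph{not} answer $\True$ on $\pebble\ol{y}\sep\ol{y}$; these two biconditionals are incompatible.

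The main obstacle, and the step deserving the most care, is verifying the behaviour of the composite transformation $\swap(\ftod(x))$ precisely: one must confirm that Proposition~\ref{prop-swap-f2d} composes correctly — $\ftod$ sends reply $\True\mapsto\True$ and $\False\mapsto\Div$, then $\swap$ sends the surviving $\True\mapsto\False$ and leaves $\Div$ as divergence — so that $y$ converges with reply $\False$ exactly when $x$ replies $\True$, and $y$ diverges exactly when $x$ replies $\False$. One must also make sure the relevant use conventions are respected throughout (every $\sfreply$ is applied to a convergent pair), which is why $x \cvg f.\cterm{\cH(v)}$ for all $v$ is invoked at the outset, and why Proposition~\ref{prop-cvg-sfreply} is used to translate between convergence and a Boolean reply. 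Everything else is a routine reply-chase using Propositions~\ref{prop-cvg-sfreply}, \ref{prop-swap-f2d}, and~\ref{prop-dup}, entirely parallel to the proof of Theorem~\ref{theorem-interpreter}.
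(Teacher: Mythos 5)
Your overall strategy (diagonalize via $\dup$ followed by a termination-twisted copy of $x$) is the right one, but your witness $y = f.\dup \conc \swap(\ftod(x))$ composes the two transformations in the wrong order, and this breaks the diagonalization. Trace the replies: if $x \sfreply f.\cterm{\cH(\pebble\ol{y}\sep\ol{y})} = \True$, then $\ftod(x)$ replies $\True$ and $\swap(\ftod(x))$ replies $\False$ --- but a reply of $\False$ still means \emph{convergence} (Proposition~\ref{prop-cvg-sfreply} says $y \cvg u$ iff the reply is $\True$ \emph{or} $\False$), so $y$ converges on $\pebble\ol{y}$, exactly as $x$ predicts; there is no contradiction, and your claim that $y \sfreply f.\cterm{\cH(\pebble\ol{y})} = \False$ ``contradicts Proposition~\ref{prop-cvg-sfreply} applied to $y \cvg f.\cterm{\cH(\pebble\ol{y})}$'' is simply false. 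If $x$ replies $\False$, then $\ftod(x)$ diverges and so does $\swap(\ftod(x))$, so $y$ diverges, again exactly as $x$ predicts --- you notice this yourself (``wait, that is consistent''), but the escape you sketch does not work: unlike an interpreter, a solution of the halting problem is only required to get the \emph{convergence} of $y$ right, not its Boolean reply, so the reply-flipping argument of Theorem~\ref{theorem-interpreter} has no purchase here. The summary biconditional you end with (``$y$ halts iff $x$ does \emph{not} answer $\True$'') is the opposite of what your construction delivers: for your $y$, $y$ halts on $\pebble\ol{y}$ iff $x$ \emph{does} answer $\True$ on $\pebble\ol{y}\sep\ol{y}$, which is perfectly compatible with $x$ being a correct solution.

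The fix is to compose in the other order, as the paper does: take $y = f.\dup \conc \ftod(\swap(x))$. Then $x$ replying $\True$ gives $\swap(x)$ replying $\False$, hence $\ftod(\swap(x))$ diverging, while $x$ replying $\False$ gives $\swap(x)$ replying $\True$, hence $\ftod(\swap(x))$ replying $\True$; so $y$ converges on $\pebble\ol{y}$ exactly when $x$ declares that it diverges, and both cases of Proposition~\ref{prop-cvg-sfreply} yield the contradiction. This ordering also avoids a secondary gap in your version: Proposition~\ref{prop-swap-f2d} only describes $\swap$ applied to instruction sequences whose reply is $\True$ or $\False$, so applying $\swap$ to the possibly divergent $\ftod(x)$ needs an extra argument (e.g.\ that $\swap$ is an involution and hence preserves divergence), whereas in the paper's order $\swap$ is only ever applied to $x$, which converges everywhere by the solution property. (Also, the instantiation of the solution property you want is $v = \ol{y}$, not $v$ the empty sequence.)
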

\begin{proof}
Assume the contrary.
Let $x \in \Lf{\IF(\cH)}$ be such that $x$ produces a re\-flexive
solution of the halting problem for $\Lf{I}$ with respect to $\cH$, and
let $y = f.\dup \conc \ftod(\swap(x))$.
Then $x \cvg f.\cterm{\cH(\pebble \ol{y} \sep \ol{y})}$.
By Propositions~\ref{prop-cvg-sfreply} and~\ref{prop-swap-f2d}, it
fol\-lows that $\swap(x) \cvg f.\cterm{\cH(\pebble \ol{y} \sep \ol{y})}$
and either
$\swap(x) \sfreply f.\cterm{\cH(\pebble \ol{y} \sep \ol{y})} = \True$ or
$\swap(x) \sfreply\linebreak[2] f.\cterm{\cH(\pebble \ol{y} \sep \ol{y})}
  = \False$.

In the case where
$\swap(x) \sfreply f.\cterm{\cH(\pebble \ol{y} \sep \ol{y})} = \True$,
we have by Proposition~\ref{prop-swap-f2d} that
(i)~$\ftod(\swap(x)) \sfreply f.\cterm{\cH(\pebble \ol{y} \sep \ol{y})}
       = \True$ and
(ii)~$x \sfreply f.\cterm{\cH(\pebble \ol{y} \sep \ol{y})} = \False$.
By Proposition~\ref{prop-dup}, it follows from~(i) that
$(f.\dup \conc \ftod(\swap(x))) \sfreply f.\cterm{\cH(\pebble \ol{y})} =
 \True$.
Since $y = f.\dup \conc\linebreak \ftod(\swap(x))$, we have
$y \sfreply f.\cterm{\cH(\pebble \ol{y})} = \True$.
On the other hand, because $x$ produces a reflexive solution, it follows
from~(ii) that $y \dvg f.\cterm{\cH(\pebble \ol{y})}$.
By Proposition~\ref{prop-cvg-sfreply}, this contradicts with
$y \sfreply f.\cterm{\cH(\pebble \ol{y})} = \True$.

In the case where
$\swap(x) \sfreply f.\cterm{\cH(\pebble \ol{y} \sep \ol{y})} = \False$,
we have by Proposition~\ref{prop-swap-f2d} that
(i)~$\ftod(\swap(x)) \sfreply f.\cterm{\cH(\pebble \ol{y} \sep \ol{y})}
       = \Div$ and
(ii)~$x \sfreply f.\cterm{\cH(\pebble \ol{y} \sep \ol{y})} =
\True$.
By Proposition~\ref{prop-dup}, it follows from~(i) that
$(f.\dup \conc \ftod(\swap(x))) \sfreply f.\cterm{\cH(\pebble \ol{y})} =
 \Div$.
Since $y = f.\dup \conc \ftod(\swap(x))$, we have
$y \sfreply f.\cterm{\cH(\pebble \ol{y})} = \Div$.
On the other hand, because $x$ produces a reflexive solution, it follows
from~(ii) that $y \cvg f.\cterm{\cH(\pebble \ol{y})}$.
By Proposition~\ref{prop-cvg-sfreply}, this contradicts with
$y \sfreply f.\cterm{\cH(\pebble \ol{y})} = \Div$.
\qed
\end{proof}

Below, we will give an alternative proof of
Theorem~\ref{theorem-non-autosolv}.
A case distinction is needed in both proofs, but in the alternative
proof it concerns a minor issue.
The issue in question is covered by the following lemma.
\begin{lemma}
\label{lemma-non-autosolv}
Let $\cH \in \FU(\SBS)$, let $I \subseteq \IF(\cH)$,
let $x \in \Lf{\IF(\cH)}$ be such that $x$ produces a reflexive solution
of the halting problem for $\Lf{I}$ with respect to $\cH$,
let $y \in \Lf{I}$, and let $v \in \seqof{\set{0,1,\sep}}$.
Then
$y \cvg f.\cterm{\cH(\pebble v)}$ implies
$y \sfreply f.\cterm{\cH(\pebble v)} =
 x \sfreply f.\cterm{\cH(\pebble \ol{\ftod(y)} \sep v)}$.
\end{lemma}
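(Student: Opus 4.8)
The plan is to unwind the definitions of $\ftod$ and of ``producing a reflexive solution of the halting problem'' and to feed them Proposition~\ref{prop-swap-f2d} together with Proposition~\ref{prop-cvg-sfreply}. The key observation is that $\ftod(y)$ differs from $y$ only in that every $\haltN$ has become $\fjmp{0}$; hence $\ftod(y)$ diverges exactly where $y$ would have terminated with value $\False$, and converges with value $\True$ exactly where $y$ converges with value $\True$. So from the hypothesis $y \cvg f.\cterm{\cH(\pebble v)}$, Proposition~\ref{prop-cvg-sfreply} gives that $y \sfreply f.\cterm{\cH(\pebble v)}$ is $\True$ or $\False$, and this is where the (minor) case distinction enters.

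First I would treat the case $y \sfreply f.\cterm{\cH(\pebble v)} = \True$. Here I want to conclude $\ftod(y) \cvg f.\cterm{\cH(\pebble v)}$ and $\ftod(y) \sfreply f.\cterm{\cH(\pebble v)} = \True$. This should follow from the behaviour of $\ftod$ on threads: if $p \sfreply u = \True$ then $\ftod'(p) \sfreply u = \True$, which is already implicit in the reasoning behind Proposition~\ref{prop-swap-f2d} (applying the relevant implication with $x$ replaced by $\swap(x)$, or more directly by the analogue ``if $p \sfreply u = \True$ then $\ftod'(p) \sfreply u = \True$'' established in that proof). Then, since $\ftod(y) \in \Lf{I}$ and $x$ produces a reflexive solution of the halting problem for $\Lf{I}$ with respect to $\cH$, the defining biconditional of ``solution of the halting problem'' gives $x \sfreply f.\cterm{\cH(\pebble \ol{\ftod(y)} \sep v)} = \True$ precisely because $\ftod(y) \cvg f.\cterm{\cH(\pebble v)}$. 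Combining, $y \sfreply f.\cterm{\cH(\pebble v)} = \True = x \sfreply f.\cterm{\cH(\pebble \ol{\ftod(y)} \sep v)}$, as required.

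Second I would treat the case $y \sfreply f.\cterm{\cH(\pebble v)} = \False$. Now $\ftod(y)$ should diverge on $f.\cterm{\cH(\pebble v)}$: the $\haltN$ that $y$ would have reached is replaced by a jump to nowhere, so the thread becomes $\DeadEnd$, and by Proposition~\ref{prop-cvg-sfreply} (or directly by axiom R3 after the thread-level rewriting used in the proof of Proposition~\ref{prop-swap-f2d}) we get $\ftod(y) \dvg f.\cterm{\cH(\pebble v)}$. Since $\ftod(y) \in \Lf{I}$ and $x$ produces a solution of the halting problem for $\Lf{I}$, the biconditional now yields that $x \sfreply f.\cterm{\cH(\pebble \ol{\ftod(y)} \sep v)} \neq \True$; and since $x \cvg f.\cterm{\cH(v')}$ for all $v' \in \SBS$ (the first clause in the definition of producing a solution), Proposition~\ref{prop-cvg-sfreply} forces $x \sfreply f.\cterm{\cH(\pebble \ol{\ftod(y)} \sep v)} \in \set{\True,\False}$, hence it equals $\False = y \sfreply f.\cterm{\cH(\pebble v)}$. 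This closes the second case and the lemma.

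The only real obstacle is making precise the claim ``$\ftod(y) \cvg f.\cterm{\cH(\pebble v)}$ iff $y$ converges there with value $\True$, and otherwise $\ftod(y)$ diverges there'', i.e.\ transporting the thread-level facts about $\ftod'$ from the proof of Proposition~\ref{prop-swap-f2d} to the instruction-sequence level; but this is exactly the content of the identity $\extr{i,\ftod(y)} = \ftod'(\extr{i,y})$ proved there, combined with axiom R9 and Lemma~\ref{lemma-projections}, so no new machinery is needed. Everything else is bookkeeping with the two defining clauses of ``solution of the halting problem'' and Proposition~\ref{prop-cvg-sfreply}.
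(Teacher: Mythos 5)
Your proof is correct and follows essentially the same route as the paper's: a case distinction on whether $y \sfreply f.\cterm{\cH(\pebble v)}$ is $\True$ or $\False$ (justified by Proposition~\ref{prop-cvg-sfreply}), then Proposition~\ref{prop-swap-f2d} to get $\ftod(y) \cvg f.\cterm{\cH(\pebble v)}$ resp.\ $\ftod(y) \dvg f.\cterm{\cH(\pebble v)}$, and finally the defining biconditional of ``produces a solution of the halting problem'' applied to $\ftod(y) \in \Lf{I}$. You are in fact slightly more careful than the paper in the $\False$ case, where you note that the biconditional only yields $x \sfreply f.\cterm{\cH(\pebble \ol{\ftod(y)} \sep v)} \neq \True$ and that the convergence-everywhere clause plus Proposition~\ref{prop-cvg-sfreply} is needed to upgrade this to $= \False$.
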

\begin{proof}
By Proposition~\ref{prop-cvg-sfreply}, it follows from
$y \cvg f.\cterm{\cH(\pebble v)}$ that either
$y \sfreply f.\cterm{\cH(\pebble v)} = \True$ or
$y \sfreply f.\cterm{\cH(\pebble v)} = \False$.

In the case where $y \sfreply f.\cterm{\cH(\pebble v)} = \True$, we have
by Propositions~\ref{prop-cvg-sfreply} and~\ref{prop-swap-f2d} that
$\ftod(y) \cvg f.\cterm{\cH(\pebble v)}$ and so
$x \sfreply f.\cterm{\cH(\pebble \ol{\ftod(y)} \sep v)} = \True$.

In the case where $y \sfreply f.\cterm{\cH(\pebble v)} = \False$, we have
by Propositions~\ref{prop-cvg-sfreply} and~\ref{prop-swap-f2d} that
$\ftod(y) \dvg f.\cterm{\cH(\pebble v)}$ and so
$x \sfreply f.\cterm{\cH(\pebble \ol{\ftod(y)} \sep v)} = \False$.
\qed
\end{proof}

\begin{trivlist}
\item
\emph{Another proof of Theorem~\ref{theorem-non-autosolv}.}
Assume the contrary.
Let $x \in \Lf{\IF(\cH)}$ be such that $x$ produces a reflexive solution
of the halting problem for $\Lf{I}$ with re\-spect to $\cH$, and
let $y = \ftod(\swap(f.\dup \conc x))$.
Then $x \cvg f.\cterm{\cH(\pebble \ol{y} \sep \ol{y})}$.
By Propo\-sitions~\ref{prop-cvg-sfreply}, \ref{prop-swap-f2d}
and~\ref{prop-dup}, it follows that
$\swap(f.\dup \conc x) \cvg f.\cterm{\cH(\pebble \ol{y})}$.
By Lemma~\ref{lemma-non-autosolv}, it follows that
$\swap(f.\dup \conc x) \sfreply f.\cterm{\cH(\pebble \ol{y})} =
  x \sfreply f.\cterm{\cH(\pebble \ol{y} \sep \ol{y})}$.
By Proposition~\ref{prop-swap-f2d}, it follows that
$(f.\dup \conc x) \sfreply f.\cterm{\cH(\pebble \ol{y})} \neq
 x \sfreply f.\cterm{\cH(\pebble \ol{y} \sep \ol{y})}$.
On the other hand, by Propo\-sition~\ref{prop-dup}, we have that
$(f.\dup \conc x) \sfreply f.\cterm{\cH(\pebble \ol{y})} =
 x \sfreply f.\cterm{\cH(\pebble \ol{y} \sep \ol{y})}$.
This contradicts with
$(f.\dup \conc x) \sfreply f.\cterm{\cH(\pebble \ol{y})} \neq
 x \sfreply f.\cterm{\cH(\pebble \ol{y} \sep \ol{y})}$.
\qed
\end{trivlist}

Let $\cH = \set{\tup{\dup,\Dup}}$.
By Theorem~\ref{theorem-non-autosolv}, the halting problem for
$\Lf{\set{\dup}}$ with respect to $\cH$ is not (potentially)
autosolvable.
However, it is decidable.
\begin{theorem}
\label{theorem-decidable}
Let $\cH = \set{\tup{\dup,\Dup}}$.
Then the halting problem for $\Lf{\set{\dup}}$ with respect to $\cH$ is
decidable.
\end{theorem}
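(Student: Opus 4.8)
The key observation is that the reply component of $\Dup$ is constantly $\True$, so $\sreply{\dup}(\cH(s)) = \True$ for every $s \in \SBS$. Since every basic instruction occurring in an $x \in \Lf{\set{\dup}}$ is $f.\dup$, on execution of $x$ on $f.\cterm{\cH(\pebble v)}$ every request receives the reply $\True$, and hence the reply obtained at each step is independent of the state reached. Therefore the flow of control on such an execution is deterministic and does not depend on $v$ at all; in particular whether $x \cvg f.\cterm{\cH(\pebble v)}$ holds is the same for all $v$. So it is enough to give an algorithm that, from $x \in \Lf{\set{\dup}}$, decides whether $x$ converges on $f.\cterm{\cH(\pebble v)}$ for one (equivalently, every) $v \in \seqof{\set{0,1,\sep}}$.

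The plan is then to make this control flow explicit as a deterministic walk on the finitely many primitive-instruction positions of $x = u_1 \conc \ldots \conc u_k$: from position $i$ one moves to $i+1$ when $u_i$ is $f.\dup$ or $\ptst{f.\dup}$ (the reply being $\True$), to $i+2$ when $u_i = \ntst{f.\dup}$, to $i+l$ when $u_i = \fjmp{l}$, and to $i-l$ when $u_i = \bjmp{l}$; the walk halts successfully at a position carrying $\haltP$ or $\haltN$ and gets stuck (deadlock) when a move leaves $\set{1,\ldots,k}$. Using the defining equations for thread extraction, axiom R6, and the thread-extraction clause for infinite jump chains, I would verify that running $x$ on $f.\cterm{\cH(\pebble v)}$ follows exactly this walk starting from position $1$, and that $x \cvg f.\cterm{\cH(\pebble v)}$ holds iff the walk reaches a position carrying $\haltP$ or $\haltN$. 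This is the analogue of --- but simpler than --- the reduction in the proof of Theorem~\ref{theorem-comput}, since here the reply of $\dup$ is fixed once and for all rather than supplied by an induction hypothesis.

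Finally I would note that the walk is decidable: simulate it for at most $k$ steps; if it reaches a termination instruction, $x$ converges; if a move leaves $\set{1,\ldots,k}$, $x$ diverges; and if after $k$ steps neither has occurred, then by the pigeonhole principle a position among those visited has recurred, so, the walk being deterministic, it is ultimately periodic and never reaches a termination instruction, whence $x$ diverges. Combined with the preceding reduction, this yields the theorem.

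I do not expect a genuine obstacle. The only point requiring some care is the verification that execution of $x$ on $f.\cterm{\cH(\pebble v)}$ really tracks the walk and agrees with it on convergence; this is a routine induction over positions using $\sreply{\dup}(\cH(s)) = \True$ for all $s$ together with axiom R6, much in the style of the calculations in the proofs of Propositions~\ref{prop-cvg-sfreply} and~\ref{prop-swap-f2d}, with the one extra point that a cycle of the walk meeting no termination instruction corresponds on the side of $x$ either to an infinite jump chain or to a thread that performs $f.\dup$ forever, both of which diverge.
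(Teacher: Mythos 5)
Your proposal is correct and rests on exactly the same key observation as the paper's proof, namely that $\Dup^r$ is constantly $\True$, so the control flow of any $x \in \Lf{\set{\dup}}$ is independent of the state and can be decided by a finite, purely syntactic analysis of $x$. The paper packages this as a replacement of $f.\dup$ and $\ptst{f.\dup}$ by $\fjmp{1}$ and of $\ntst{f.\dup}$ by $\fjmp{2}$, reducing to convergence of a finite jump-only program on $\emptysf$, while you unfold that last decidability claim into an explicit $k$-step simulation with a pigeonhole/periodicity argument; this is the same proof with one step made more explicit.
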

\begin{proof}
Let $x \in \Lf{\set{\dup}}$, and let $x'$ be $x$ with each occurrence of
$f.\dup$ and $\ptst{f.\dup}$ replaced by $\fjmp{1}$ and each occurrence
of $\ntst{f.\dup}$ replaced by $\fjmp{2}$.
For all $v \in \SBS$, $\Dup^r(v) = \True$.
Therefore, $x \cvg f.\cH(v) \Iff x' \cvg \emptysf$ for all $v \in \SBS$.
Because $x'$ is finite, $x' \cvg \emptysf$ is decidable.
\qed
\end{proof}
Both proofs of Theorem~\ref{theorem-non-autosolv} given above are
diagonalization proofs in disguise.
Theorem~\ref{theorem-decidable} indicates that diagonalization and
decidability are independent so to speak.

\section{Concluding Remarks}
\label{sect-concl}

We have extended and strengthened the results regarding the halting
problem for programs given in~\cite{BP04a} in a setting which looks to
be more adequate to describe and analyse issues regarding the halting
problem for programs.

It happens that decidability depends on the halting problem instance
considered.
This is different in the case of the on-line halting problem for
programs, i.e.\ the problem to forecast during its execution whether a
program will eventually terminate (see~\cite{BP04a}).

An interesting option for future work is to investigate the bounded
halting problem for programs, i.e.\ the problem to determine, given a
program and an input to the program, whether execution of the program on
that input will terminate after the execution of no more than a fixed
number of basic instructions.

\bibliographystyle{splncs03}
\bibliography{TA}

\begin{thebibliography}{1}
\providecommand{\url}[1]{\texttt{#1}}
\providecommand{\urlprefix}{URL }

\bibitem{BL02a}
Bergstra, J.A., Loots, M.E.: Program algebra for sequential code. Journal of
  Logic and Algebraic Programming  51(2),  125--156 (2002)

\bibitem{BM06a}
Bergstra, J.A., Middelburg, C.A.: A thread algebra with multi-level strategic
  interleaving. Theory of Computing Systems  41(1),  3--32 (2007)

\bibitem{BM07g}
Bergstra, J.A., Middelburg, C.A.: Program algebra with a jump-shift
  instruction. Journal of Applied Logic  6(4),  553--563 (2008)

\bibitem{BM09l}
Bergstra, J.A., Middelburg, C.A.: Functional units for natural numbers.
  {\tt arXiv:0911.1851v2 [cs.PL]} (2009)

\bibitem{BM09k}
Bergstra, J.A., Middelburg, C.A.: Instruction sequence processing operators.
  {\tt arXiv:0910.5564v2 [cs.LO]} (2009)

\bibitem{BP04a}
Bergstra, J.A., Ponse, A.: Execution architectures for program algebra. Journal
  of Applied Logic  5(1),  170--192 (2007)

\bibitem{ST99a}
Sannella, D., Tarlecki, A.: Algebraic preliminaries. In: Astesiano, E.,
  Kreowski, H.J., Krieg-Br{\"{u}}ckner, B. (eds.) Algebraic Foundations of
  Systems Specification, pp. 13--30. Springer-Verlag, Berlin (1999)

\bibitem{Tur37a}
Turing, A.M.: On computable numbers, with an application to the {Entscheidungs}
  problem. Proceedings of the London Mathematical Society, Series 2  42,
  230--265 (1937), correction: \emph{ibid}, 43:544--546, 1937

\bibitem{Wir90a}
Wirsing, M.: Algebraic specification. In: van Leeuwen, J. (ed.) Handbook of
  Theoretical Computer Science, vol.~B, pp. 675--788. Elsevier, Amsterdam
  (1990)

\end{thebibliography}


\end{document}